\begin{document}

\title{Low-Rank Covariance-Assisted Downlink Training and Channel Estimation for FDD Massive MIMO Systems}

\author{Jun Fang, Xingjian Li, Hongbin Li,~\IEEEmembership{Senior
Member,~IEEE}, and Feifei Gao
\thanks{Jun Fang, and Xingjian Li are with the National Key Laboratory
of Science and Technology on Communications, University of
Electronic Science and Technology of China, Chengdu 611731, China,
Email: JunFang@uestc.edu.cn}
\thanks{Hongbin Li is
with the Department of Electrical and Computer Engineering,
Stevens Institute of Technology, Hoboken, NJ 07030, USA, E-mail:
Hongbin.Li@stevens.edu}
\thanks{Feifei Gao is with the
Institute of Information Processing, Department of Automation,
Tsinghua University, Beijing 100084, China, Email:
feifeigao@tsinghua.edu.cn}
\thanks{This work was supported in part by the National Science
Foundation of China under Grant 61522104, and the National Science
Foundation under Grant ECCS-1408182 and Grant ECCS-1609393.}}

\maketitle

%a large percentage of time/bandwidth resources have to be devoted
%to downlink training and channel state information (CSI) feedback.

\begin{abstract}
We consider the problem of downlink training and channel
estimation in frequency division duplex (FDD) massive MIMO
systems, where the base station (BS) equipped with a large number
of antennas serves a number of single-antenna users
simultaneously. To obtain the channel state information (CSI) at
the BS in FDD systems, the downlink channel has to be estimated by
users via downlink training and then fed back to the BS. For FDD
large-scale MIMO systems, the overhead for downlink training and
CSI uplink feedback could be prohibitively high, which presents a
significant challenge. In this paper, we study the behavior of the
minimum mean-squared error (MMSE) estimator when the channel
covariance matrix has a low-rank or an approximate low-rank
structure. Our theoretical analysis reveals that the amount of
training overhead can be substantially reduced by exploiting the
low-rank property of the channel covariance matrix. In particular,
we show that the MMSE estimator is able to achieve exact channel
recovery in the asymptotic low-noise regime, provided that the
number of pilot symbols in time is no less than the rank of the
channel covariance matrix. We also present an optimal pilot design
for the single-user case, and an asymptotic optimal pilot design
for the multi-user scenario. Lastly, we develop a simple
model-based scheme to estimate the channel covariance matrix,
based on which the MMSE estimator can be employed to estimate the
channel. The proposed scheme does not need any additional training
overhead. Simulation results are provided to verify our
theoretical results and illustrate the effectiveness of the
proposed estimated covariance-assisted MMSE estimator.
\end{abstract}

%and examine the impact of estimation errors of the channel
%covariance matrix on the performance of the MMSE estimator

\begin{keywords}
Massive MIMO systems, downlink training and channel estimation,
channel covariance matrix, low rank structure, MMSE estimator.
\end{keywords}

%A large number of antennas also help overcome thermal noise as the
%array gain grows unboundedly with the number of antennas.

\section{Introduction}
Massive multiple-input multiple-output (MIMO), also known as
large-scale or very-large MIMO, is a promising technology to meet
the ever growing demands for higher throughput and better
quality-of-service of next-generation wireless communication
systems \cite{RusekPersson13,LarssonEdfors14}. Massive MIMO
systems are those that are equipped with a large number of
antennas at the base station (BS) simultaneously serving a much
smaller number of single-antenna users sharing the same
time-frequency slot. By exploiting the asymptotic orthogonality
among channel vectors associated with different users, massive
MIMO systems can achieve almost perfect inter-user interference
cancelation with a simple linear precoder and receive combiner
\cite{Marzetta10}, and thus have the potential to enhance the
spectrum efficiency by orders of magnitude. In addition to higher
throughput, massive MIMO systems can also improve the energy
efficiency and enable the use of inexpensive, low-power components
\cite{NgoLarsson13}.

%\cite{YinGesbert13,MullerCottatellucci14,YouGao15,ChoiLove14,JiangMolisch15}

%Hence the downlink channel estimated via uplink training may not
%be accurate.

To reach the full potential of massive MIMO, accurate downlink
channel state information (CSI) is required at the base station
(BS) for precoding and other operations. Downlink channel
estimation for massive MIMO systems has been extensively studied
over the past few years. Most of existing studies, e.g.
\cite{RusekPersson13,Marzetta10,YinGesbert13,MullerCottatellucci14}
assume a time division duplex (TDD) mode in which channel
reciprocity between opposite links (downlink and uplink) can be
exploited to facilitate the acquisition of the downlink CSI at the
BS. Nevertheless, it was pointed out that the reciprocity of the
wireless channel may not hold exactly due to calibration errors in
the downlink/uplink RF chains \cite{GueyLarsson04}. Also, it is
noted that current wireless cellular systems are still primarily
based on the frequency division duplex (FDD). To make the massive
MIMO technique backward compatible with current systems, it is of
great necessity to study downlink channel estimation for FDD
massive MIMO systems.

%We note that the TDD system does not suffer from this issue
%because for TDD systems, only the uplink channel is estimated and
%the training overhead is

%This is feasible for conventional MIMO scenarios where only a
%small number of antennas are employed at the BS.

For FDD systems, the reciprocity between downlink and uplink
channels no longer holds. To obtain the channel state information
at the transmitter (CSIT), the BS needs to transmit training
signals to users, and each user, after acquiring the downlink CSI
through the training phase, feeds back the CSI to the BS. The
problem lies in that the required amount of overhead for downlink
training grows linearly with the number of transmit antennas at
the BS. This may not be an issue for conventional MIMO scenarios
with only a small number of antennas. However, for massive MIMO
systems where the number of transmit antennas at the BS is large,
the overhead for the downlink training and uplink feedback could
become prohibitively high. Therefore reducing the overhead for
downlink training and uplink CSIT feedback has been a central
issue in FDD massive MIMO systems. A multitude of efforts has been
directed towards this goal over the past few years, e.g.
\cite{RaoLau14,GaoDai15,XieGao15,AdhikaryNam13,SunGao15,ChoiLove14,NohZoltowski14,JiangMolisch15}.
Specifically, in \cite{RaoLau14,GaoDai15,XieGao15}, the sparsity
of the channel on the virtual angular domain has been leveraged to
formulate downlink channel estimation as a compressed sensing
problem, based on which the overhead for downlink training and
uplink feedback can be substantially reduced. Recent experiments
and studies (e.g. \cite{ZhouHerdin07,YinGesbert13}) show that for
a typical cellular configuration with a tower-mounted BS, the
angular spread of the incoming/outgoing rays at the BS is usually
small, and as a result, the channel has a sparse or an approximate
sparse representation on the virtual angular domain.

%Note that the sparsity on the virtual angular domain arises as a
%result of a small , which is typically the case for scenarios
%where the BS is located in an elevated position with few
%scatterers around.

% spatial correlation of the wireless channel.
%Spatial channel correlation also arises as a result of the small
%angle spread.

%and as a result, the channel covariance matrix exhibits a low-rank
%or an approximate low-rank property

%of the incident signals at the BS

Besides compressed sensing-based techniques
\cite{RaoLau14,GaoDai15}, another line of research approaches the
overhead reduction issue for FDD massive MIMO by implicitly or
explicitly exploiting the low-rank structure of the channel
covariance matrix, e.g.
\cite{ChoiLove14,NohZoltowski14,AdhikaryNam13,SunGao15,JiangMolisch15}.
Low-rank channel covariance matrix also arises as a result of a
small angular spread of the incoming/outgoing rays at the BS. Due
to the narrow angular spread, different paths between the BS and
the user are highly correlated, and consequently, the channel
covariance matrix has a low-rank or an approximate low-rank
structure with only a few dominant eigenvectors
\cite{ShepardYu12,YinGesbert13}. In \cite{BjornsonOttersten10}, it
was shown that even for conventional MIMO scenarios, the dimension
of the optimal pilot can be reduced if there are only a few
dominant eigenvectors associated with the channel covariance
matrix. Covariance-aided pilot design was also considered in
\cite{ChoiLove14,NohZoltowski14} for FDD massive MIMO systems,
where open-loop and closed-loop training strategies were developed
to reduce the overhead of the downlink training phase by
exploiting the spatial correlation as well as the temporal
correlation of the channel. In \cite{AdhikaryNam13,SunGao15}, the
dimensionality of the effective channels is reduced via a
prebeamforming matrix that depends only on the channel
second-order statistics (i.e. channel covariance matrix), based on
which a joint spatial division and multiplexing (JSDM) scheme
\cite{AdhikaryNam13} and a beam division multiple access scheme
\cite{SunGao15} were proposed to achieve significant savings in
both the downlink training and the CSIT uplink feedback.

In this paper, we continue the direction of covariance-aided
downlink training and channel estimation for FDD massive MIMO
systems. Specifically, we study the asymptotic behavior of the
minimum mean-squared error (MMSE) estimator when the channel
covariance matrix has a low-rank structure. Our theoretical
results reveal that with a low-rank channel covariance matrix, the
MMSE estimator employing a random (not necessarily optimal) pilot
can obtain a perfect channel recovery in the limit of vanishing
noise, provided that the length of the pilot (i.e. the number of
symbols in time) is no less than the rank of the covariance
matrix. We also examine asymptotically optimal pilot design for
the multiple-user scenario. An overlayed training strategy similar
to the JSDM scheme is proposed and shown to be asymptotically
optimal in terms of estimation errors when users have mutually
non-overlapping angles of arrival (AoAs). The optimal design
suggests that the minimum MSE can be achieved as long as the
length of pilot is no less than the rank of the channel covariance
matrix. In addition, based on the one-ring model, we develop a
simple model-based scheme to estimate the channel covariance
matrix. The proposed scheme does not require any additional
training overhead. Simulation results show that the proposed
estimated covariance-assisted MMSE estimator achieves a
substantial performance improvement over the compressed
sensing-based methods.

%The impact of estimation errors of the channel covariance matrix
%on the performance of the MMSE estimator is also examined.

The rest of this paper is organized as follows. In Section
\ref{sec:problem-formulation}, we introduce the system model and
basic assumptions. The asymptotic behavior of the MMSE estimator
in the limit of vanishing noise is examined in Section
\ref{sec:low-noise-MMSE}. An optimal pilot design for the
single-user scenario and an asymptotic optimal pilot design for
the multi-user scenario are studied in Sections
\ref{sec:pilot-design-single-user} and
\ref{sec:pilot-design-multiuser}, respectively. In Section
\ref{sec:estimated-covariance-MMSE}, we develop a simple
model-based scheme to estimate the channel covariance matrix, and
construct a MMSE estimator to estimate the channel. Simulation
results are provided in Section \ref{sec:simulation}, followed by
concluding remarks in Section \ref{sec:conclusion}.

\begin{figure}[!t]
\centering
\includegraphics[width=8cm]{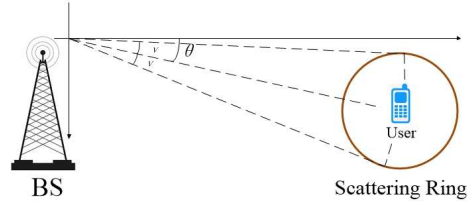}
\caption{Schematic for the one-ring model.} \label{fig6}
\end{figure}

\section{System Model and Problem Formulation} \label{sec:problem-formulation}
We consider the problem of downlink training and channel
estimation in a frequency division duplex (FDD) massive MIMO
system, where the base station (BS) equipped with a large number
of antennas serves a number of single-antenna users
simultaneously. To simplify our problem, we consider the
single-user scenario. The extension of our results to the
multi-user scenario is straightforward, and the pilot design for
the multi-user case will be discussed in Section
\ref{sec:pilot-design-multiuser}. We assume the channel
$\boldsymbol{h}\in\mathbb{C}^{M}$ is a flat Rayleigh fading
channel under a narrowband assumption, where $M$ denotes the
number of transmit antennas at the BS. The extension to the
wideband frequency-selective channel is straightforward when an
OFDM transmission scheme is adopted. The signal received by the
user can be expressed as
\begin{align}
y_t=\boldsymbol{x}_t^T\boldsymbol{h}+w_t \quad \forall t=1,\ldots,
T \label{received-signal}
\end{align}
where $\boldsymbol{x}_t\in\mathbb{C}^{M}$ is the transmitted pilot
symbol vector at time $t$, and $w_t$ denotes the additive white
Gaussian noise with zero mean and variance $\sigma^2$. Define
$\boldsymbol{y}\triangleq
[y_1\phantom{0}y_2\phantom{0}\ldots\phantom{0}y_T]^T$,
$\boldsymbol{X}\triangleq
[\boldsymbol{x}_1\phantom{0}\boldsymbol{x}_2\phantom{0}\ldots\phantom{0}\boldsymbol{x}_T]^T$,
and $\boldsymbol{w}\triangleq
[w_1\phantom{0}w_2\phantom{0}\ldots\phantom{0}w_T]^T$. The data
model (\ref{received-signal}) can be rewritten as
\begin{align}
\boldsymbol{y}=\boldsymbol{X}\boldsymbol{h}+\boldsymbol{w}
\label{data-model}
\end{align}

In this paper, we consider the classical one-ring model that has
been widely adopted (e.g.
\cite{YinGesbert13,GaoDai15,AdhikaryNam13}) to characterize the
massive MIMO channel, where the BS is assumed to be located in an
elevated position with few scatterers around, and the propagation
between the BS and the user is mainly characterized by rich local
scatterers around the user (see Fig. \ref{fig6}). Assuming the
propagation consists of $P$ i.i.d. paths, we have
\begin{align}
\boldsymbol{h}=\frac{1}{\sqrt{P}}\sum_{p=1}^P\alpha_{p}\boldsymbol{a}(\theta_{p})
\label{channel-model}
\end{align}
where $\alpha_{p}\sim \mathcal{CN}(0,\xi^2)$ denotes the fading
coefficient associated with the $p$th path, and
$\boldsymbol{a}(\theta_{p})$ is the steering vector. For a uniform
linear array, it is given as
\begin{align}
\boldsymbol{a}(\theta_{p}) \triangleq
[1\phantom{0}e^{-j(2\pi/\chi)d\text{cos}(\theta_{p})}
\phantom{0}\ldots \phantom{0}
e^{-j(M-1)(2\pi/\chi)d\text{cos}(\theta_{p})}]^T
\end{align}
in which $\chi$ is the signal wavelength, $d$ denotes the distance
between neighboring antenna elements, and $\theta_p\in [0,\pi]$ is
the azimuth angle of arrival (AoA) of the $p$th path. In the
one-ring mode, the user is surrounded by rich local scatterers
with a radius $r$ that is relatively small compared to the
distance between the BS and the user, $D$. Thus the angular spread
at the BS, approximately given as $\delta=\text{arctan}(r/D)$, is
small.

%that results from the small angle spread at the BS

%It is, therefore, important to develop affordable training
%strategies for FDD massive MIMO systems.

%the channel $\boldsymbol{h}$ has an approximately sparse
%representation on the virtual angular domain. This virtual domain
%sparsity can be exploited to reduce the training overhead for
%downlink channel estimation in FDD massive MIMO systems
%\cite{RaoLau14,GaoDai15}. On the other hand, due to the narrow
%angular range,

To estimate the channel from the received signal $\boldsymbol{y}$
(c.f. (\ref{data-model})), it is usually required that the number
of pilot symbols (in time), $T$, is no less than the number of
transmitted antennas $M$, i.e. $T\geq M$. When $M$ is large, the
overhead for downlink training and uplink channel state
information (CSI) feedback becomes prohibitively high. Hopefully,
due to the narrow angular spread at the BS, the steering vectors
$\{\boldsymbol{a}(\theta_{p})\}$ of these $P$ paths are highly
correlated, and thus the channel covariance matrix
$\boldsymbol{R}=E[\boldsymbol{h}\boldsymbol{h}^H]$ has an
approximate low-rank structure. This low-rank structure can be
utilized to reduce the overhead for downlink training for FDD
systems, see, e.g. \cite{AdhikaryNam13,ChoiLove14,NohZoltowski14}.

%to combat pilot contamination in time division duplex (TDD)
%systems \cite{YinGesbert13} and

%another line of research leverages the low-rank property of the
%channel covariance matrix. Note that

In this paper, we study the behavior of the minimum mean-squared
error (MMSE) estimator when the the channel covariance matrix has
a low rank structure. We conduct a quantitative analysis to
investigate how much training overhead reduction can be achieved
by exploiting the low-rank structure of the channel covariance
matrix. Assume $\boldsymbol{h}$ is zero-mean complex Gaussian with
covariance matrix $\boldsymbol{R}$, the MMSE estimate of the
channel $\boldsymbol{h}$ is given as
\begin{align}
\boldsymbol{\hat{h}}=\boldsymbol{R}\boldsymbol{X}^H(\boldsymbol{X}\boldsymbol{R}\boldsymbol{X}^H
+\sigma^2\boldsymbol{I})^{-1}\boldsymbol{y}
\end{align}
Note that the MMSE estimator, with the aid of the statistical
information of the channel, does not require an invertible pilot
matrix $\boldsymbol{X}$ (i.e. $T\geq M$) for channel estimation.
The mean-squared error (MSE) associated with the MMSE estimate is
given by
\begin{align}
\text{MSE}=&E\left[\|\boldsymbol{\hat{h}}-\boldsymbol{h}\|_2^2\right]
\nonumber\\
=&\text{tr}\left(\boldsymbol{R}-\boldsymbol{R}\boldsymbol{X}^H(\boldsymbol{X}\boldsymbol{R}\boldsymbol{X}^H
+\sigma^2\boldsymbol{I})^{-1}\boldsymbol{X}\boldsymbol{R}\right)
\label{h-mse}
\end{align}

\section{Asymptotic Behavior of The MMSE} \label{sec:low-noise-MMSE} In this section, we first study the
behavior of the MMSE estimator in the asymptotic low-noise regime,
i.e. $\sigma^2\rightarrow 0$. Our asymptotic analysis shows that a
perfect channel recovery from a small number of symbols is
possible when the channel covariance matrix has a low-rank
structure. Our main results are summarized as follows.

%Define
%$\boldsymbol{\Phi}\triangleq\boldsymbol{R}^{\frac{1}{2}}\boldsymbol{X}^{H}\boldsymbol{X}\boldsymbol{R}^{\frac{1}{2}}$.
%Let
%\begin{align}
%\boldsymbol{\Phi}=\boldsymbol{R}^{\frac{1}{2}}\boldsymbol{X}^{H}\boldsymbol{X}\boldsymbol{R}^{\frac{1}{2}}=
%\boldsymbol{U}\boldsymbol{\Lambda}\boldsymbol{U}^{H}
%\end{align}
%denote the eigenvalue decomposition (EVD) of $\boldsymbol{\Phi}$,
%where $\boldsymbol{U}$ is a unitary matrix consisting of
%eigenvectors of $\boldsymbol{\Phi}$, and
%$\boldsymbol{\Lambda}=\text{diag}(\lambda_{1},\ldots,\lambda_{r},0,\ldots,0)$
%is a diagonal matrix with
%$\lambda_{1}\geq\lambda_{2}\geq\ldots\geq\lambda_{r}\geq 0$.

\newtheorem{theorem}{Theorem}
\begin{theorem}
Consider the channel estimation problem described in
(\ref{data-model}), where
$\boldsymbol{h}\sim\mathcal{N}(\boldsymbol{0},\boldsymbol{R})$ and
the rank of the channel covariance matrix $\boldsymbol{R}$ is
$r=\text{rank}(\boldsymbol{R})$. Define
$\boldsymbol{\Phi}\triangleq\boldsymbol{R}^{\frac{1}{2}}\boldsymbol{X}^{H}\boldsymbol{X}\boldsymbol{R}^{\frac{1}{2}}$.
Let
$\boldsymbol{\Phi}=\boldsymbol{V}\boldsymbol{\Gamma}\boldsymbol{V}^{H}$
denote the eigenvalue decomposition (EVD) of $\boldsymbol{\Phi}$,
where $\boldsymbol{V}\triangleq
[\boldsymbol{v}_1\phantom{0}\ldots\phantom{0}\boldsymbol{v}_M]$ is
a unitary matrix consisting of eigenvectors of
$\boldsymbol{\Phi}$, and
$\boldsymbol{\Gamma}=\text{diag}(\gamma_{1},\ldots,\gamma_{r},0,\ldots,0)$
is a diagonal matrix with
$\gamma_{1}\geq\gamma_{2}\geq\ldots\geq\gamma_{r}> 0$. Suppose the
pilot signal $\boldsymbol{X}$ is randomly generated, and the
number of symbols, $T$, is no less than $r$, i.e. $T\geq r$, then
the MSE of the MMSE estimate of $\boldsymbol{h}$ is given by
\begin{align}
E\left[\|\boldsymbol{\hat{h}}-\boldsymbol{h}\|_2^2\right]
=\sum_{i=1}^r\left(1+\gamma_i/\sigma^2\right)^{-1}\boldsymbol{v}_i^H\boldsymbol{R}\boldsymbol{v}_i
\end{align}
and the MSE approaches zero in the limit of vanishing noise, that
is,
\begin{align}
\lim_{\sigma^2\rightarrow
0}E\left[\|\boldsymbol{\hat{h}}-\boldsymbol{h}\|_2^2\right]=0
\nonumber
\end{align}
\label{theorem1}
\end{theorem}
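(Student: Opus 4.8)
The plan is to collapse the trace expression for the MSE in (\ref{h-mse}) into a diagonal sum governed by the eigenvalues of $\boldsymbol{\Phi}$, and then to argue that only the $r$ nonzero eigenvalues contribute. First I would factor $\boldsymbol{R}=\boldsymbol{R}^{1/2}\boldsymbol{R}^{1/2}$ and set $\boldsymbol{A}\triangleq\boldsymbol{X}\boldsymbol{R}^{1/2}$, so that $\boldsymbol{\Phi}=\boldsymbol{A}^{H}\boldsymbol{A}$. Applying the push-through identity $\boldsymbol{A}^{H}(\boldsymbol{A}\boldsymbol{A}^{H}+\sigma^{2}\boldsymbol{I})^{-1}\boldsymbol{A}=(\boldsymbol{\Phi}+\sigma^{2}\boldsymbol{I})^{-1}\boldsymbol{\Phi}$ and the elementary relation $\boldsymbol{I}-(\boldsymbol{\Phi}+\sigma^{2}\boldsymbol{I})^{-1}\boldsymbol{\Phi}=\sigma^{2}(\boldsymbol{\Phi}+\sigma^{2}\boldsymbol{I})^{-1}$, the MSE collapses to
\begin{align}
\text{MSE}=\sigma^{2}\,\text{tr}\!\left(\boldsymbol{R}^{1/2}(\boldsymbol{\Phi}+\sigma^{2}\boldsymbol{I})^{-1}\boldsymbol{R}^{1/2}\right). \nonumber
\end{align}

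Next I would substitute the EVD $\boldsymbol{\Phi}=\boldsymbol{V}\boldsymbol{\Gamma}\boldsymbol{V}^{H}$, so that $(\boldsymbol{\Phi}+\sigma^{2}\boldsymbol{I})^{-1}=\boldsymbol{V}(\boldsymbol{\Gamma}+\sigma^{2}\boldsymbol{I})^{-1}\boldsymbol{V}^{H}$, and use the cyclic property of the trace to reach
\begin{align}
\text{MSE}=\sum_{i=1}^{M}\frac{\sigma^{2}}{\gamma_{i}+\sigma^{2}}\,\boldsymbol{v}_{i}^{H}\boldsymbol{R}\boldsymbol{v}_{i}. \nonumber
\end{align}
For $i\leq r$ the coefficient simplifies to $(1+\gamma_{i}/\sigma^{2})^{-1}$, whereas for $i>r$ (where $\gamma_{i}=0$) it equals $1$; thus the claimed identity follows as soon as the tail terms $\boldsymbol{v}_{i}^{H}\boldsymbol{R}\boldsymbol{v}_{i}$, $i>r$, are shown to vanish.

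The hard part will be precisely this tail-term vanishing, and this is where the rank hypothesis enters. Since $\boldsymbol{\Phi}=(\boldsymbol{X}\boldsymbol{R}^{1/2})^{H}(\boldsymbol{X}\boldsymbol{R}^{1/2})$ we have $\text{rank}(\boldsymbol{\Phi})=\text{rank}(\boldsymbol{X}\boldsymbol{R}^{1/2})$, and for a randomly drawn $\boldsymbol{X}$ with $T\geq r$ this rank attains its maximal value $r=\text{rank}(\boldsymbol{R}^{1/2})$ almost surely, consistent with the stated EVD. Maximal rank means $\boldsymbol{X}$ acts injectively on $\text{range}(\boldsymbol{R}^{1/2})$. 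Now for $i>r$ the eigenvector $\boldsymbol{v}_{i}$ lies in the null space of $\boldsymbol{\Phi}$, so $\boldsymbol{v}_{i}^{H}\boldsymbol{\Phi}\boldsymbol{v}_{i}=\|\boldsymbol{X}\boldsymbol{R}^{1/2}\boldsymbol{v}_{i}\|_{2}^{2}=0$, giving $\boldsymbol{X}(\boldsymbol{R}^{1/2}\boldsymbol{v}_{i})=\boldsymbol{0}$ with $\boldsymbol{R}^{1/2}\boldsymbol{v}_{i}\in\text{range}(\boldsymbol{R}^{1/2})$; injectivity then forces $\boldsymbol{R}^{1/2}\boldsymbol{v}_{i}=\boldsymbol{0}$, whence $\boldsymbol{v}_{i}^{H}\boldsymbol{R}\boldsymbol{v}_{i}=\|\boldsymbol{R}^{1/2}\boldsymbol{v}_{i}\|_{2}^{2}=0$. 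This step also clarifies the role of $T\geq r$: were $T<r$, the rank of $\boldsymbol{\Phi}$ could drop below $r$, the injectivity would fail, and surviving tail terms with zero eigenvalues would contribute a nonvanishing floor to the MSE.

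Finally, with the tail eliminated, the surviving sum is exactly $\sum_{i=1}^{r}(1+\gamma_{i}/\sigma^{2})^{-1}\boldsymbol{v}_{i}^{H}\boldsymbol{R}\boldsymbol{v}_{i}$, which establishes the first claim. For the vanishing-noise limit I would observe that each $\gamma_{i}>0$ for $i\leq r$, so $(1+\gamma_{i}/\sigma^{2})^{-1}\to 0$ as $\sigma^{2}\to 0$ termwise; since the sum is finite, the entire MSE tends to zero, yielding the second claim.
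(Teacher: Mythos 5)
Your proof is correct and follows essentially the same route as the paper: the push-through/Woodbury reduction to $\sigma^{2}\,\text{tr}\bigl(\boldsymbol{R}^{1/2}(\boldsymbol{\Phi}+\sigma^{2}\boldsymbol{I})^{-1}\boldsymbol{R}^{1/2}\bigr)$ and the eigendecomposition into the sum over $\gamma_i$ match the paper's derivation exactly, and your tail-vanishing argument (injectivity of $\boldsymbol{X}$ on $\text{range}(\boldsymbol{R}^{1/2})$ from $\text{rank}(\boldsymbol{X}\boldsymbol{R}^{1/2})=r$ a.s.) is just a slightly different phrasing of the paper's observation that $\text{Range}(\boldsymbol{\Phi})=\text{Range}(\boldsymbol{U})$ with probability one when $T\geq r$.
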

\begin{proof}
Using the Woodbury identity, the MSE (\ref{h-mse}) can be
rewritten as
\begin{align}
&E\left[\|\boldsymbol{\hat{h}}-\boldsymbol{h}\|_2^2\right] \nonumber\\
=&\text{tr}\left(\boldsymbol{R}^{1/2}(\boldsymbol{I}-\boldsymbol{R}^{1/2}
\boldsymbol{X}^H(\boldsymbol{X}\boldsymbol{R}\boldsymbol{X}^H+\sigma^2\boldsymbol{I})^{-1}
\boldsymbol{X}\boldsymbol{R}^{1/2})\boldsymbol{R}^{1/2}\right)
\nonumber\\
=&\text{tr}\left(\boldsymbol{R}^{1/2}(\boldsymbol{I}+\sigma^{-2}\boldsymbol{R}^{1/2}\boldsymbol{X}^H
\boldsymbol{X}\boldsymbol{R}^{1/2})^{-1}
\boldsymbol{R}^{1/2}\right) \nonumber\\
=&
\text{tr}\left(\boldsymbol{R}\boldsymbol{V}(\sigma^{-2}\boldsymbol{\Gamma}+
\boldsymbol{I})^{-1}\boldsymbol{V}^H \right) \nonumber\\
=&\sum_{i=1}^r\left(1+\gamma_i/\sigma^2\right)^{-1}\boldsymbol{v}_i^H\boldsymbol{R}\boldsymbol{v}_i
+\sum_{i=r+1}^M \boldsymbol{v}_i^H\boldsymbol{R}\boldsymbol{v}_i
\label{mse-decomposition}
\end{align}
We can see that the first term in (\ref{mse-decomposition})
vanishes as $\sigma^2\rightarrow 0$.

%where $\boldsymbol{v}_i$ denotes the $i$th column of
%$\boldsymbol{V}$.

We now examine under what conditions the second term in
(\ref{mse-decomposition}) reduces to zero. Let
\begin{align}
\boldsymbol{R}=\boldsymbol{U}\boldsymbol{\Lambda}\boldsymbol{U}^H
\end{align}
denote the reduced EVD of $\boldsymbol{R}$, where
$\boldsymbol{U}\in\mathbb{C}^{M\times r}$ and
$\boldsymbol{\Lambda}\in\mathbb{C}^{r\times r}$. We can write
\begin{align}
\boldsymbol{R}^{\frac{1}{2}}\boldsymbol{X}^{H}=\boldsymbol{U}\boldsymbol{\Lambda}^{\frac{1}{2}}
\boldsymbol{U}^H\boldsymbol{X}^{H}=\boldsymbol{U}\boldsymbol{C}
\end{align}
where $\boldsymbol{C}\triangleq\boldsymbol{\Lambda}^{\frac{1}{2}}
\boldsymbol{U}^H\boldsymbol{X}^{H}\in\mathbb{C}^{r\times T}$. When
$T\geq r$ and the pilot symbols of $\boldsymbol{X}$ are randomly
generated according to some distribution, the matrix
$\boldsymbol{C}$ has a full row rank with probability one, i.e.
$\text{rank}(\boldsymbol{C})=r$. Thus we have
\begin{align}
\text{Range}(\boldsymbol{R}^{\frac{1}{2}}\boldsymbol{X}^{H})=\text{Range}(\boldsymbol{\Phi})
=\text{Range}(\boldsymbol{U}) \label{range-space}
\end{align}
where $\text{Range}(\boldsymbol{A})$ denotes the column space
spanned by the column vectors of $\boldsymbol{A}$. From
(\ref{range-space}), we can immediately arrive at
\begin{align}
\boldsymbol{u}_i^H\boldsymbol{R}=\boldsymbol{v}_i^H\boldsymbol{U}=\boldsymbol{v}_i^H\boldsymbol{\Phi}
=\boldsymbol{0}
\quad \forall i=r+1,\ldots,M
\end{align}
Hence the second term in (\ref{mse-decomposition}) disappears
provided that the length of the pilot in time is no less than the
rank of the channel covariance matrix, i.e. $T\geq r$, and
eventually we reach the conclusion that the MSE of the MMSE
estimate of $\boldsymbol{h}$ approaches zero in the limit of
vanishing noise, that is,
\begin{align}
\lim_{\sigma^2\rightarrow
0}E\left[\|\boldsymbol{\hat{h}}-\boldsymbol{h}\|_2^2\right]=0
\nonumber
\end{align}
The proof is completed here.
\end{proof}

%given that the training signal $\boldsymbol{X}$ is optimally
%devised

%In fact, researchers (e.g. \cite{ChoiLove14}) have already found
%that the MMSE estimator is able to provide a reliable channel
%estimate even when the length of the pilot is smaller than the
%number of antennas, i.e. $T<M$. Nevertheless, the rationale behind
%this fact was not fully realized before.

\emph{Discussions:} The significance of Theorem \ref{theorem1}
lies in that, in the limit of vanishing noise, it establishes
sufficient conditions for the MMSE estimator to achieve exact
channel recovery from only a small number of pilot symbols. We
note that another line of research \cite{RaoLau14,GaoDai15} for
FDD downlink training and channel estimation exploits the sparsity
of the channel on the virtual angular domain and formulates the
channel estimation as a compressed sensing problem:
\begin{align}
\boldsymbol{y}=\boldsymbol{X}\boldsymbol{h}+\boldsymbol{w}=\boldsymbol{X}\boldsymbol{A}\boldsymbol{\tilde{h}}
+\boldsymbol{w}
\end{align}
where $\boldsymbol{A}$ is a basis for the virtual angular domain.
For the uniform linear array case, the basis $\boldsymbol{A}$ is a
discrete Fourier transform (DFT) matrix. $\boldsymbol{\tilde{h}}$
is a sparse vector to be estimated. This class of approaches are
justified by compressed sensing theories, which assert that a
sparse signal can be perfectly recovered from compressive
measurements, provided that the measurement matrix satisfies a
certain RIP condition \cite{CandesTao05}. Our theorem here can be
regarded as a counterpart result for the MMSE estimator, and
provides a justification for using the MMSE estimator for channel
estimation from a small number of pilot symbols.

It is also interesting to compare conditions required by the MMSE
estimator and those by compressed sensing techniques to achieve
perfect channel recovery. First recall the following lemma that
characterizes the number of dimensions of a subspace spanned by a
number of steering vectors with a bounded support of angles of
arrival (AoAs):

\newtheorem{lemma}{Lemma}
\begin{lemma} \label{lemma1}
Define
\begin{align}
\boldsymbol{\alpha}(x)\triangleq [1\phantom{0}e^{-j\pi
x}\phantom{0}\ldots\phantom{0}e^{-j\pi (M-1)x}]^T
\label{DFT-basis}
\end{align}
and $\mathcal{A}\triangleq\text{span}\{\boldsymbol{\alpha}(x),x\in
[-1,1)\}$. Given $b_1,b_2\in [-1,1]$ and $b_1<b_2$, define
$\mathcal{B}\triangleq\text{span}\{\boldsymbol{\alpha}(x),x\in
[b_1,b_2]\}$, then
\begin{align}
&\text{dim}(\mathcal{A})=M \nonumber\\
&\text{dim}(\mathcal{B})\sim \text{$(b_2-b_1)M/2$ when $M$ grows
large}
\end{align}
\end{lemma}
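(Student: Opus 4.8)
The plan is to prove the two claims separately, beginning with the easy identity $\dim(\mathcal{A})=M$ and then turning to the asymptotic count for $\mathcal{B}$, which carries the real content. For $\dim(\mathcal{A})=M$ I would select $M$ distinct points $x_1,\dots,x_M\in[-1,1)$ and observe that, since the map $x\mapsto e^{-j\pi x}$ is a bijection from $[-1,1)$ onto the unit circle, the nodes $z_k=e^{-j\pi x_k}$ are pairwise distinct. The matrix $[\boldsymbol{\alpha}(x_1)\,\cdots\,\boldsymbol{\alpha}(x_M)]$ is then a Vandermonde matrix in the $z_k$'s with distinct nodes, hence nonsingular, so its columns already span $\mathbb{C}^M$ and $\dim(\mathcal{A})=M$. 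The same argument applied to $M$ distinct points inside any nondegenerate subinterval shows that, strictly speaking, the algebraic dimension of $\mathcal{B}$ is also $M$ whenever $b_1<b_2$. Consequently the assertion $\dim(\mathcal{B})\sim(b_2-b_1)M/2$ must be read as a statement about the \emph{effective} (numerical) rank---the number of dominant, non-vanishing eigenvalues of the associated correlation operator as $M\to\infty$---which is the notion relevant to the low-rank channel covariance in the one-ring model.

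To make this precise I would introduce the Hermitian Gram matrix $\boldsymbol{G}\in\mathbb{C}^{M\times M}$ with entries $G_{mn}=\int_{b_1}^{b_2}e^{-j\pi(m-n)x}\,dx$, i.e. $\boldsymbol{G}=\int_{b_1}^{b_2}\boldsymbol{\alpha}(x)\boldsymbol{\alpha}(x)^H\,dx$. A short argument shows $\mathrm{Range}(\boldsymbol{G})=\mathcal{B}$: any $\boldsymbol{u}$ in the null space satisfies $\int_{b_1}^{b_2}|\boldsymbol{u}^H\boldsymbol{\alpha}(x)|^2\,dx=0$, which forces $\boldsymbol{u}\perp\boldsymbol{\alpha}(x)$ for all $x\in[b_1,b_2]$ by continuity, so $\mathrm{Null}(\boldsymbol{G})=\mathcal{B}^{\perp}$. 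The key structural observation is that $G_{mn}$ depends only on $m-n$, so $\boldsymbol{G}$ is Hermitian Toeplitz; after the substitution $\omega=\pi x$ its generating symbol on $[-\pi,\pi)$ is the scaled indicator $g(\omega)=2\cdot\mathbb{1}_{[\pi b_1,\pi b_2]}(\omega)$, taking the value $2$ on a set of measure $\pi(b_2-b_1)$ and $0$ elsewhere.

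With this symbol in hand I would invoke Szeg\H{o}'s eigenvalue distribution theorem for Hermitian Toeplitz matrices: for every continuous $\phi$, $\frac1M\sum_{i=1}^M\phi(\lambda_i)\to\frac{1}{2\pi}\int_{-\pi}^{\pi}\phi(g(\omega))\,d\omega$. Because $g$ takes only the two values $0$ and $2$, the limiting spectral measure is the two-atom measure $\tfrac{b_2-b_1}{2}\,\delta_2+\bigl(1-\tfrac{b_2-b_1}{2}\bigr)\,\delta_0$, which places no mass in the open interval $(0,2)$. Hence for any fixed threshold $t\in(0,2)$ the fraction of eigenvalues exceeding $t$ converges to $(b_2-b_1)/2$, so the number of non-negligible eigenvalues---the effective rank of $\boldsymbol{G}$, equivalently the effective dimension of $\mathcal{B}$---is $\sim(b_2-b_1)M/2$. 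Setting $b_1=-1$, $b_2=1$ recovers the full-symbol case in which all $M$ eigenvalues are bounded away from zero, consistent with $\dim(\mathcal{A})=M$.

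I expect the main obstacle to be conceptual rather than computational: one must first reconcile the fact that the exact algebraic dimension of $\mathcal{B}$ is $M$ with the claimed $(b_2-b_1)M/2$ by committing to the effective-rank interpretation. The second delicate point is the passage from Szeg\H{o}'s theorem (which controls $\frac1M\sum\phi(\lambda_i)$ for \emph{continuous} $\phi$) to an actual eigenvalue count above a threshold; this is legitimate precisely because the limiting measure has no mass in $(0,2)$, so every $t\in(0,2)$ is a continuity point of the limiting distribution function and the sharp-threshold counting function converges. The finer fact that only $O(\log M)$ eigenvalues lie in the transition band---the Slepian--Landau--Pollak clustering for discrete prolate (band- and time-limiting) operators---is stronger than needed here, but it explains why the $\sim$ approximation is so accurate and why the channel covariance is so well approximated by a low-rank matrix.
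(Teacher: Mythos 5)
Your proposal is correct, but it is necessarily a different route from the paper's, because the paper offers no proof at all: it simply defers to \cite[Lemma 1]{YinGesbert13}. Taken on its own merits, your argument is sound, and it gets right the one point on which the lemma is easy to misread: since any $M$ distinct nodes in a nondegenerate interval give a nonsingular Vandermonde matrix, the algebraic dimension of $\mathcal{B}$ is exactly $M$, so the claim $\dim(\mathcal{B})\sim (b_2-b_1)M/2$ can only be about the effective (numerical) rank, i.e.\ the number of non-vanishing eigenvalues of the Gram/covariance operator --- which is precisely the sense in which the lemma is used to bound $\text{rank}(\boldsymbol{R})$ in the paper. Your chain of steps checks out: $\boldsymbol{G}=\int_{b_1}^{b_2}\boldsymbol{\alpha}(x)\boldsymbol{\alpha}(x)^H\,dx$ is Hermitian Toeplitz with symbol equal to twice the indicator of $[\pi b_1,\pi b_2]$ after $\omega=\pi x$; the Szeg\H{o} limit theorem gives the two-atom limiting spectral measure with mass $(b_2-b_1)/2$ at the value $2$; and the upgrade from weak convergence against continuous test functions to a sharp eigenvalue count above a threshold $t\in(0,2)$ is legitimate exactly because the limiting measure puts no mass in $(0,2)$. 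The reference the paper cites reaches the same conclusion by a different mechanism, essentially the DFT-grid picture reflected in Properties 2--3 of this paper (the eigenbasis of $\boldsymbol{R}$ is asymptotically the DFT basis, and one counts the fraction of angular coordinates $\omega_m=-1+2(m-1)/M$ falling in $[b_1,b_2]$, which is $(b_2-b_1)/2$ of them); your Szeg\H{o}-based argument is more self-contained and makes the ``effective rank'' interpretation explicit, at the cost of invoking a heavier classical theorem. The closing remark on the $O(\log M)$ transition band is a correct bonus that the lemma does not require.
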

\begin{proof}
See \cite[Lemma 1]{YinGesbert13}.
\end{proof}

Consider the one-ring model (\ref{channel-model}) with the
multipath angle of arrival $\theta$ distributed on a bounded
support, i.e. $\theta\in
[\theta_{\text{min}},\theta_{\text{max}}]$. From Lemma
\ref{lemma1}, the rank of the channel covariance matrix
$\boldsymbol{R}$ is upper bounded by
\begin{align}
\text{rank}(\boldsymbol{R})\leq \eta M \quad \text{as
$M\rightarrow\infty$}
\end{align}
where $\eta$ is defined as
\begin{align}
\eta\triangleq
|\cos(\theta_{\text{min}})-\cos(\theta_{\text{max}})|d/\chi
\end{align}
in which $d$ denotes the distance between neighboring antennas and
$\chi$ is the signal wavelength. Another important property from
Lemma \ref{lemma1} is that, when $M\rightarrow\infty$, the channel
$\boldsymbol{h}$ has a sparse representation on a virtual angular
domain with $r=\text{rank}(\boldsymbol{R})$ nonzero coefficients.

With the above results, we are now ready to make a fair comparison
between conditions required by the MMSE estimator and,
respectively, by the compressed sensing methods for exact recovery
of the channel. For the MMSE estimator, from Theorem
\ref{theorem1}, we know that as few as $T=r$ symbols are needed to
perfectly recover the channel. On the other hand, for compressed
sensing-based methods, it has been shown that the number of
required measurements for exact recovery is of order
$T=\mathcal{O}(r\log(M/r))$ using polynomial-time optimization
solvers or greedy algorithms \cite{CandesTao05}. If the
computational complexity is not a concern, then at least $T=2r$
measurements are required for exact recovery via the
$\ell_0$-minimization. From the above discussion, we can see that
the MMSE estimator requires fewer symbols than compressed sensing
techniques for exact channel recovery. This result puts the
covariance-aided methods into a favorable position for FDD
downlink training and channel estimation.

%sufficient condition required by our theorem for an exact channel
%recovery is much less restrictive than the condition required by
%compressed sensing techniques.

%In addition,

%This result provides the possibility of achieving a significant
%training overhead reduction.

\section{Optimal Pilot Sequence Design} \label{sec:pilot-design-single-user}
Our analysis in the previous section reveals that as few as $T=r$
symbols in time are required to guarantee perfect channel recovery
in the asymptotic low-noise regime, i.e. $\sigma^2\rightarrow 0$.
Nevertheless, assuming a noiseless scenario is unrealistic in
practical systems. Therefore it is meaningful to study the
behavior of the MMSE estimator for a non-vanishing $\sigma^2$. For
the case $\sigma^2\neq 0$, we would like to examine whether a
larger value of $T$ leads to a better estimation accuracy, or if
$T=r$ is sufficient to attain a minimum MSE. To answer this
question, we first need to impose a power constraint on the pilot
signal, i.e. $\text{tr}(\boldsymbol{X}\boldsymbol{X}^H)\leq P$;
otherwise a fair comparison between pilots of different lengths is
impossible. Note that different pilots of the same length also
result in different MSEs. Hence simply comparing the MSEs attained
by two arbitrary pilots of different lengths does not provide any
meaningful answers. To make sense, we have to compare the MSEs
attained by optimally devised pilots for different values of $T$,
and see if increasing $T$ will result in a lower MSE. This
requires us to examine the following optimization problem
\begin{align}
\min_{\boldsymbol{X}} & \quad
\text{MSE}=\text{tr}\left(\boldsymbol{R}-\boldsymbol{R}\boldsymbol{X}^H(\boldsymbol{X}\boldsymbol{R}\boldsymbol{X}^H
+\sigma^2\boldsymbol{I})^{-1}\boldsymbol{X}\boldsymbol{R}\right)
\nonumber\\
\text{s.t.} &\quad \text{tr}(\boldsymbol{X}\boldsymbol{X}^H)\leq P
\label{opt-1}
\end{align}
The solution of the above optimization problem is summarized as
follows.

\begin{theorem} \label{theorem2}
Let
$\boldsymbol{R}=\boldsymbol{U}_0\boldsymbol{\Lambda}_0\boldsymbol{U}_0^H$
denote the EVD\footnote{Here
$\boldsymbol{R}=\boldsymbol{U}_0\boldsymbol{\Lambda}_0\boldsymbol{U}_0^H$
is used to distinguish itself from the truncated EVD
$\boldsymbol{R}=\boldsymbol{U}\boldsymbol{\Lambda}\boldsymbol{U}^H$.}
of $\boldsymbol{R}$, where
$\boldsymbol{\Lambda}_0=\text{diag}(\lambda_{1},\ldots,\lambda_{M})$
is a diagonal matrix with its diagonal entries arranged in a
decreasing order and $\boldsymbol{U}_0\in \mathbb{C}^{M\times M}$
is a unitary matrix. The optimal solution to (\ref{opt-1}) is then
given by
\begin{align}
\boldsymbol{X}=[\boldsymbol{\Delta}\phantom{0}
\boldsymbol{0}]\boldsymbol{U}_0^H \label{optimalX}
\end{align}
where $\boldsymbol{\Delta}=\text{diag}(\delta_1,\ldots,\delta_T)$
with $\delta_i$ given as
\begin{align}
\delta_i=
\begin{cases}
\sqrt{\mu-\sigma^2\lambda_{i}^{-1}}& \text{if $\mu\geq\sigma^2\lambda_{i}^{-1}$ and $\lambda_{i}\neq 0$}\\
0& \text{otherwise}
\end{cases} \label{delta}
\end{align}
in which $\mu$ is determined by the constraint
$\sum_{i=1}^T\delta_i^2=P$.
\end{theorem}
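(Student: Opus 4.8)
The plan is to convert the matrix optimization (\ref{opt-1}) into a decoupled scalar power-allocation problem that admits a water-filling solution. First I would note that, exactly as in the proof of Theorem \ref{theorem1}, the Woodbury identity recasts the objective as $\text{tr}(\boldsymbol{R}^{1/2}(\boldsymbol{I}+\sigma^{-2}\boldsymbol{R}^{1/2}\boldsymbol{X}^H\boldsymbol{X}\boldsymbol{R}^{1/2})^{-1}\boldsymbol{R}^{1/2})$, which depends on the pilot only through the Gram matrix $\boldsymbol{Q}\triangleq\boldsymbol{X}^H\boldsymbol{X}$; the power constraint likewise becomes $\text{tr}(\boldsymbol{Q})\leq P$. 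Since $\boldsymbol{X}\in\mathbb{C}^{T\times M}$, the admissible $\boldsymbol{Q}$ are precisely the positive semidefinite matrices with $\text{rank}(\boldsymbol{Q})\leq T$, and conversely any such $\boldsymbol{Q}$ can be factored into a feasible $\boldsymbol{X}$. Because $\boldsymbol{R}^{1/2}$ annihilates the null space of $\boldsymbol{R}$, only the compression of $\boldsymbol{Q}$ onto $\text{Range}(\boldsymbol{R})$ affects the MSE, so I would set the irrelevant blocks to zero, rotate into the eigenbasis of $\boldsymbol{R}$ via $\tilde{\boldsymbol{Q}}\triangleq\boldsymbol{U}_0^H\boldsymbol{Q}\boldsymbol{U}_0$, and reduce the objective to $\text{tr}((\boldsymbol{\Lambda}^{-1}+\sigma^{-2}\tilde{\boldsymbol{Q}})^{-1})$ on the $r$-dimensional range, with $\boldsymbol{\Lambda}$ now invertible, the trace constraint intact, and $\text{diag}(\tilde{\boldsymbol{Q}})$ nonnegative.

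The crux of the argument, which I expect to be the main obstacle, is showing that an optimal $\tilde{\boldsymbol{Q}}$ may be taken diagonal, so that the matrix problem decouples. Writing $\boldsymbol{Z}\triangleq\boldsymbol{\Lambda}^{-1}+\sigma^{-2}\tilde{\boldsymbol{Q}}\succ\boldsymbol{0}$, I would invoke the Fischer (equivalently Cauchy--Schwarz) inequality $[\boldsymbol{Z}^{-1}]_{ii}\geq 1/Z_{ii}$ valid for every positive definite $\boldsymbol{Z}$, which yields $\text{tr}(\boldsymbol{Z}^{-1})=\sum_i[\boldsymbol{Z}^{-1}]_{ii}\geq\sum_i 1/Z_{ii}$ with equality when $\boldsymbol{Z}$, hence $\tilde{\boldsymbol{Q}}$, is diagonal. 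As $Z_{ii}=\lambda_i^{-1}+\sigma^{-2}\tilde{Q}_{ii}$ involves only the diagonal of $\tilde{\boldsymbol{Q}}$, replacing any feasible $\tilde{\boldsymbol{Q}}$ by $\text{diag}(\tilde{\boldsymbol{Q}})$ preserves the trace and leaves each $Z_{ii}$ unchanged while not increasing the objective, so the problem collapses to minimizing $\sum_i(\lambda_i^{-1}+\sigma^{-2}q_i)^{-1}$ over $q_i\triangleq\tilde{Q}_{ii}\geq 0$ with $\sum_i q_i\leq P$. The one point requiring care here is the pilot-length budget $\text{rank}(\tilde{\boldsymbol{Q}})\leq T$: when $T\geq r$ it is vacuous on the range and the diagonalization is unconditional, whereas for $T<r$ at most $T$ directions may be activated, and since the $\lambda_i$ are arranged in decreasing order an exchange argument selects the $T$ strongest eigendirections, which is why $\boldsymbol{X}$ excites only the leading columns of $\boldsymbol{U}_0$.

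It then remains to solve the scalar convex program, which is routine. Forming the Lagrangian $\sum_i(\lambda_i^{-1}+\sigma^{-2}q_i)^{-1}+\nu(\sum_i q_i-P)$ and imposing the KKT stationarity condition $\sigma^{-2}(\lambda_i^{-1}+\sigma^{-2}q_i)^{-2}=\nu$ on the active set forces $\lambda_i^{-1}+\sigma^{-2}q_i$ to equal a common constant, giving $q_i=\mu-\sigma^2\lambda_i^{-1}$ whenever this quantity is positive (and $\lambda_i\neq 0$) and $q_i=0$ otherwise, with the water level $\mu$ pinned down by $\sum_i q_i=P$. Setting $q_i=\delta_i^2$ and reconstructing $\boldsymbol{X}=[\boldsymbol{\Delta}\phantom{0}\boldsymbol{0}]\boldsymbol{U}_0^H$ so that $\tilde{\boldsymbol{Q}}=\text{diag}(\delta_1^2,\ldots,\delta_T^2,0,\ldots,0)$ then reproduces the water-filling solution (\ref{delta}) and completes the argument.
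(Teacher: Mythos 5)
Your proof is correct, but it follows a genuinely different and more self-contained route than the paper's. The paper's proof of Theorem \ref{theorem2} is essentially a citation: it invokes \cite[Theorem 1]{PalomarCioffi03} to assert that the optimal pilot has the form $\boldsymbol{X}^H=\boldsymbol{U}_0[:,1:T]\boldsymbol{\Delta}$, and then only works out the resulting scalar water-filling problem (\ref{opt-2}) via the KKT conditions. You instead prove that structural fact from scratch: reparametrizing by the Gram matrix $\boldsymbol{Q}=\boldsymbol{X}^H\boldsymbol{X}$ (with $\mathrm{rank}(\boldsymbol{Q})\le T$ encoding the pilot length), compressing onto $\mathrm{Range}(\boldsymbol{R})$ to get the objective $\mathrm{tr}\bigl((\boldsymbol{\Lambda}^{-1}+\sigma^{-2}\tilde{\boldsymbol{Q}})^{-1}\bigr)$, and using $[\boldsymbol{Z}^{-1}]_{ii}\ge 1/Z_{ii}$ -- which is really a Schur-complement (or Cauchy--Schwarz) fact rather than Fischer's inequality, which concerns determinants -- to justify restricting $\tilde{\boldsymbol{Q}}$ to be diagonal in the eigenbasis of $\boldsymbol{R}$. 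The final water-filling step then coincides with the paper's. What your approach buys is a complete, elementary derivation of the eigenbasis-alignment property that the paper outsources; what it costs is the extra care you correctly flag for $T<r$, where diagonalization can inflate $\mathrm{rank}(\tilde{\boldsymbol{Q}})$ beyond $T$ so the bound $\sum_i 1/Z_{ii}$ is not automatically attainable. Your exchange argument does close this gap -- for fixed $c=\sigma^{-2}q>0$ the function $\lambda\mapsto(\lambda^{-1}+c)^{-1}-\lambda=-c\lambda^2/(1+c\lambda)$ is decreasing, so reassigning an active power level from a smaller to a larger eigenvalue never increases the objective, which forces the $T$ leading eigendirections to be the ones excited -- but that one-line verification should be written out rather than asserted.
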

\begin{proof}
According to \cite[Theorem 1]{PalomarCioffi03}, the optimal
$\boldsymbol{X}$ has a form of
\begin{align}
\boldsymbol{X}^H=\boldsymbol{U}_0[:,1:T]\boldsymbol{\Delta}
\label{eqn1}
\end{align}
where $\boldsymbol{U}_0[:,1:T]$ consists of $T$ eigenvectors of
$\boldsymbol{R}$ associated with the first $T$ largest
eigenvalues, and
$\boldsymbol{\Delta}=\text{diag}(\delta_1,\ldots,\delta_T)$ is a
diagonal matrix with its diagonal elements to be determined as
follows. Substituting (\ref{eqn1}) into (\ref{opt-1}), the
optimization (\ref{opt-1}) can be simplified as
\begin{align}
\min_{\{\delta_i\}}&\quad
\sum_{i=1}^T\frac{\sigma^2\lambda_{i}}{\delta_i^2\lambda_{i}+\sigma^2}\nonumber\\
\text{s.t.} &\quad \sum_{i=1}^T\delta_i^2\leq P \nonumber\\
&\quad \delta_i^2\geq 0 \quad \forall i=1,\ldots,T \label{opt-2}
\end{align}
The above optimization can be solved analytically by resorting to
the Lagrangian function and Karush-Kuhn-Tucker (KKT) conditions,
which leads to a water-filling type power allocation scheme
described by
\begin{align}
\delta_i=
\begin{cases}
\sqrt{\mu-\sigma^2\lambda_{i}^{-1}}& \text{if $\mu\geq\sigma^2\lambda_{i}^{-1}$ and $\lambda_{i}\neq 0$}\\
0& \text{otherwise}
\end{cases}
\end{align}
where $\mu$ is determined to ensure that the KKT condition
$\sum_{i=1}^T\delta_i^2=P$ is satisfied. The proof is completed
here.
\end{proof}

We now discuss whether a larger value of $T$ would result in a
lower MSE. Note that the MSE achieved by the optimal
$\boldsymbol{X}$ is given by
\begin{align}
\text{MSE}(T)=&\sum_{i=1}^r\lambda_{i}-\sum_{i=1}^{T}
\frac{\delta_{i}^2(T)\lambda_{i}^2}{\delta_{i}^2(T)\lambda_{i}+\sigma^2}
\end{align}
where we use $\delta_i(T)$ to denote the dependence of $\delta_i$
on $T$. The $r$-rank channel covariance matrix $\boldsymbol{R}$
implies $\lambda_{i}=0,\forall i>r$. Considering the case $T>r$,
from (\ref{delta}), it is easy to verify that for any $T>r$, we
have
\begin{align}
\delta_{i}(T)=\begin{cases} \delta_i(r) & \forall i=1,\ldots,r \\
0 & \forall i=r+1,\ldots,T \end{cases}
\end{align}
Therefore we can arrive at
\begin{align}
\text{MSE}(T)=\text{MSE}(r) \quad \forall T>r
\end{align}
On the other hand, from the optimality of the solution
(\ref{delta}), it is clear that
\begin{align}
\text{MSE}(T)\leq\text{MSE}(r) \quad \forall T<r
\end{align}
Based on the above results, we know that the minimum MSE can be
attained by simply choosing $T=r$, and a larger $T$ beyond the
value of $r$ does not lead to a smaller MSE. This result provides
an affirmative answer to the question discussed at the beginning
of this section, that is, given a transmit power constraint
$\text{tr}(\boldsymbol{X}\boldsymbol{X}^H)=P$, a minimum MSE can
be achieved by setting the number of symbols equal to the rank of
the channel covariance matrix, i.e. $T=r$.

%Thus a significant training overhead reduction can be achieved.

Note that the pilot constraint considered here is different from
that of \cite{ChoiLove14}, where unitary training with equal power
allocation per pilot symbol is assumed, i.e.
$\boldsymbol{X}\boldsymbol{X}^H=\rho\boldsymbol{I}$. Clearly, in
this case, the total amount of transmit power increases unbounded
as $T$ becomes large, more precisely, we have
$\text{tr}(\boldsymbol{X}\boldsymbol{X}^H)=\rho T$. Hence a larger
$T$ always leads to an improved channel estimation accuracy.
Nevertheless, for a power-constrained wireless network where
energy efficiency is of a major concern, the power constraint
considered in this paper may be more meaningful.

%and our result provides a guideline for choosing the length of the
%pilot and for power allocation among pilot symbols.

%joint spatial division and multiplexing scheme
%\cite{AdhikaryNam13}

\section{Asymptotically Optimal Pilot for Multi-User Scenarios} \label{sec:pilot-design-multiuser}
In the previous section, we derived the optimal pilot sequence for
the single-user case. For massive MIMO systems where the BS aims
to simultaneously serve a number of users, the pilot sequence has
to be shared by multiple users. Unfortunately, the channels
associated with these users may not have the same channel
covariance matrix. In this case, it is impossible to find an
optimal pilot sequence $\boldsymbol{X}$ to simultaneously minimize
the MSEs associated with all users. To address this difficulty, in
\cite{JiangMolisch15}, the pilot sequence is designed to maximize
a summation of the conditional mutual information associated with
all users, and an iterative algorithm was developed to solve the
maximization problem. In this section, a different criterion is
considered, where the objective is to minimize the sum of MSEs
associated with all users, i.e.
\begin{align}
\min_{\boldsymbol{X}}\quad &\sum_{k=1}^K \text{MSE}_k \nonumber\\
=& \sum_{k=1}^K
\text{tr}\left(\boldsymbol{R}_k-\boldsymbol{R}_k\boldsymbol{X}^H(\boldsymbol{X}\boldsymbol{R}_k\boldsymbol{X}^H
+\sigma^2\boldsymbol{I})^{-1}\boldsymbol{X}\boldsymbol{R}_k\right)
\nonumber\\
\text{s.t.}\quad &\text{tr}(\boldsymbol{X}\boldsymbol{X}^H)\leq P
\label{opt-3}
\end{align}
where $\boldsymbol{R}_k$ and $\text{MSE}_k$ denote the channel
covariance matrix and the MSE associated with the $k$th user,
respectively. Also, to simplify the problem, we assume the noise
variances across different users are identical, i.e.
$\sigma_1^2=\ldots=\sigma_K^2=\sigma^2$. Finding an analytical
solution to the above optimization is difficult. Nevertheless, we
will show that an asymptotically optimal training sequence can be
devised given that users have mutually non-overlapping angles of
arrival (AoAs). Here the asymptotic optimality means that the
solution approaches the optimal one as the number of antennas at
the BS goes to infinity.

%to discuss the design of the training sequence and its asymptotic
%optimality

Before proceeding, we first introduce the following properties
which were proved in \cite{YinGesbert13,YouGao15} and reveal the
eigenstructure properties of the channel covariance matrices.
Consider the channel $\boldsymbol{h}$ generated by the one ring
model with a bounded support of angle of arrival $\theta\in
[\theta_{\text{min}},\theta_{\text{max}}]$. Let $\boldsymbol{R}$
denote the channel covariance matrix. We have the following
properties regarding the channel covariance matrices.

\emph{Property 1} \cite[Lemma 3]{YinGesbert13}: In the asymptotic
regime of large number of antennas, steering vectors
$\boldsymbol{a}(\vartheta)$ with $\vartheta \notin
[\theta_{\text{min}},\theta_{\text{max}}]$ fall in the null space
of the covariance matrix $\boldsymbol{R}$, i.e.
\begin{align}
\text{null}(\boldsymbol{R})\supset
\text{span}\{\boldsymbol{a}(\vartheta)/\sqrt{M}, \forall \vartheta
\notin [\theta_{\text{min}},\theta_{\text{max}}]\}, \text{as
$M\rightarrow\infty$}
\end{align}

\emph{Property 2} \cite[Lemma 1]{YouGao15}: For the uniform linear
array (ULA) case, when $M\rightarrow\infty$, the eigenvector
matrix of the channel covariance matrix $\boldsymbol{R}$ can be
well approximated by a unitary discrete Fourier transform (DFT)
matrix.

\emph{Property 3}: From the above two properties, we naturally
arrive at the following property: The column vectors in the DFT
matrix whose angular coordinates are located outside the support
of angle of arrival form an orthonormal basis for the null space
of $\boldsymbol{R}$. Meanwhile, those column vectors in the DFT
matrix whose angular coordinates lie within the support of angle
of arrival form an orthonormal basis for $\boldsymbol{R}$. More
precisely, let
\begin{align}
\boldsymbol{F}\triangleq\frac{1}{\sqrt{M}}
[\boldsymbol{\alpha}(\omega_1)\phantom{0}\boldsymbol{\alpha}(\omega_2)\phantom{0}
\ldots\phantom{0}\boldsymbol{\alpha}(\omega_M)]
\end{align}
denote the DFT matrix, in which $\omega_m=-1+2(m-1)/M,\forall m$,
and $\boldsymbol{\alpha}(\omega_m)$ is defined in
(\ref{DFT-basis}). Let
$\boldsymbol{R}=\boldsymbol{U}\boldsymbol{\Lambda}\boldsymbol{U}^H$
denote the truncated eigenvalue decomposition, where
$\boldsymbol{U}\in\mathbb{C}^{M\times r}$, and
$\boldsymbol{\Lambda}\in\mathbb{C}^{r\times r}$. Then as
$M\rightarrow\infty$, $\boldsymbol{U}$ is composed of column
vectors of $\boldsymbol{F}$ whose angular coordinates
$\{\omega_i\}$ lie within the support of AoA, i.e.
\begin{align}
\boldsymbol{U}=[\boldsymbol{\alpha}(\omega_{i_1})\phantom{0}
\ldots\phantom{0}\boldsymbol{\alpha}(\omega_{i_{r}})]
\end{align}
where $\omega_{i}\in
[2d\cos(\theta_{\text{min}})/\chi,2d\cos(\theta_{\text{max}})/\chi]$
for $i=i_1,\ldots,i_r$.

%and $\omega_{i}\notin
%[2D\cos(\theta_{\text{min}})/\lambda,2D\cos(\theta_{\text{max}})/\lambda]$
%for $i\neq i_1,\ldots,i_r$.

%Recalling Theorem \ref{theorem2}, we know that the optimal
%training sequence for user $k$ is given by

We now discuss how to devise an asymptotically optimal pilot
sequence for (\ref{opt-3}). Let
\begin{align}
\boldsymbol{R}_k=\boldsymbol{U}_k\boldsymbol{\Lambda}_k\boldsymbol{U}_k^H
\end{align}
denote the truncated eigenvalue decomposition, where
$\boldsymbol{U}_k\in\mathbb{C}^{M\times r_k}$, and
$\boldsymbol{\Lambda}_k\in\mathbb{C}^{r_k\times r_k}$. $r_k$
denotes the rank of $\boldsymbol{R}_k$. For simplicity, we assume
$r_k=r, \forall k$. Inspired by the above properties, we propose
an overlayed pilot sequence that is a superposition of a set of
pilot sequences $\{\boldsymbol{X}_k\}$
\begin{align}
\boldsymbol{X}=\sum_{k=1}^K \boldsymbol{X}_k
\label{optimalX-overlay}
\end{align}
where $\boldsymbol{X}_k$ denotes the pilot sequence optimally
designed for user $k$, i.e. given a power constraint
$\text{tr}(\boldsymbol{X}_k\boldsymbol{X}_k^H)=P_k^{\ast}$,
$\boldsymbol{X}_k$ is given by Theorem \ref{theorem2}, i.e.
\begin{align}
\boldsymbol{X}_k=\boldsymbol{\Delta}_k^H\boldsymbol{U}_k^H
\label{Xk}
\end{align}
in which $\boldsymbol{\Delta}_k$ is a diagonal matrix with its
diagonal elements optimized according to a water-filling power
allocation scheme as described in Theorem \ref{theorem2}.

%that in the asymptotic limit $M\rightarrow\infty$, the eigenvector
%matrix $\boldsymbol{U}_k$ associated with the channel covariance
%matrix $\boldsymbol{R}_k$ is constructed by column vectors of
%$\boldsymbol{D}$ whose sampled spatial frequencies lie within the
%support of angle of arrival.

We now show that the asymptotically optimal solution to
(\ref{opt-3}) has a form of (\ref{optimalX-overlay}). Note that
any pilot sequence $\boldsymbol{X}$ can be expressed in terms of
the DFT matrix as follows
\begin{align}
\boldsymbol{X}=\boldsymbol{Z}\boldsymbol{F}^H
\end{align}
where $\boldsymbol{Z}\in\mathbb{C}^{T\times M}$ is a matrix to be
optimized. Recalling Properties 2 and 3, we have
\begin{align}
\boldsymbol{X}\boldsymbol{R}_k=&\boldsymbol{Z}\boldsymbol{F}^H\boldsymbol{R}_k
\stackrel{(a)}{=}
(\boldsymbol{Z}_k\boldsymbol{U}_k^H+\boldsymbol{\bar{Z}}_k\boldsymbol{\bar{U}}_k^H)\boldsymbol{R}_k
\nonumber\\
=&\boldsymbol{Z}_k\boldsymbol{U}_k^H\boldsymbol{R}_k \label{eqn2}
\end{align}
where $(a)$ comes from the fact that we can partition the DFT
matrix into two parts
$\boldsymbol{F}=[\boldsymbol{U}_k\phantom{0}\boldsymbol{\bar{U}}_k]$,
in which $\boldsymbol{U}_k$ is an orthonormal basis of
$\boldsymbol{R}_k$ and $\boldsymbol{\bar{U}}_k$ is an orthonormal
basis for the null space of $\boldsymbol{R}_k$. Accordingly,
$\boldsymbol{Z}$ can be partitioned into two parts:
$\boldsymbol{Z}=[\boldsymbol{Z}_k\phantom{0}\boldsymbol{\bar{Z}}_k]$,
where $\boldsymbol{Z}_k\in\mathbb{C}^{T\times r}$ is a submatrix
of $\boldsymbol{Z}$ consisting of $r$ column vectors. Substituting
(\ref{eqn2}) into the objective function (\ref{opt-3}), we have
\begin{align}
&\sum_{k=1}^K
\text{tr}\left(\boldsymbol{R}_k-\boldsymbol{R}_k\boldsymbol{X}^H(\boldsymbol{X}\boldsymbol{R}_k\boldsymbol{X}^H
+\sigma^2\boldsymbol{I})^{-1}\boldsymbol{X}\boldsymbol{R}_k\right)
\nonumber\\
=& \sum_{k=1}^K
\text{tr}\left(\boldsymbol{R}_k-\boldsymbol{R}_k\boldsymbol{U}_k\boldsymbol{Z}_k^H
(\boldsymbol{Z}_k\boldsymbol{U}_k^H\boldsymbol{R}_k\boldsymbol{U}_k\boldsymbol{Z}_k^H
+\sigma^2\boldsymbol{I})^{-1}\boldsymbol{Z}_k\boldsymbol{U}_k^H\boldsymbol{R}_k
\right)
\end{align}
Since users have mutually non-overlapping AoAs, each matrix
$\boldsymbol{Z}_k$ is constructed by $r$ unique columns of
$\boldsymbol{Z}$ that are not shared by other matrices
$\boldsymbol{Z}_{\bar{k}},\forall \bar{k}\neq k$. Therefore the
optimization (\ref{opt-3}) can be decomposed into $K$ independent
problems, with $\boldsymbol{Z}_k$ optimized in each individual
problem
\begin{align}
\min_{\boldsymbol{Z}_k}\quad&\text{tr}\left(\boldsymbol{R}_k-\boldsymbol{R}_k\boldsymbol{U}_k\boldsymbol{Z}_k^H
(\boldsymbol{Z}_k\boldsymbol{U}_k^H\boldsymbol{R}_k\boldsymbol{U}_k\boldsymbol{Z}_k^H
+\sigma^2\boldsymbol{I})^{-1}\boldsymbol{Z}_k\boldsymbol{U}_k^H\boldsymbol{R}_k
\right)
\nonumber\\
\text{s.t.}\quad &\text{tr}(\boldsymbol{Z}_k\boldsymbol{Z}_k^H)=
P_k^{\ast}
\end{align}
where $P_k^{\ast}$ is the optimal power allocated to the $k$th
user\footnote{Our objective is to show the asymptotically optimal
pilot sequence has a form of (\ref{optimalX-overlay}). The search
of the optimal power allocation $\{P_k^{\ast}\}$ is not considered
here.}. From Theorem \ref{theorem2}, we know that setting $T=r$ is
sufficient to achieve a minimum MSE and the optimal
$\boldsymbol{Z}_k$ is a diagonal matrix
\begin{align}
\boldsymbol{Z}_k^{\ast}=\boldsymbol{\Delta}_k^H
\end{align}
with its diagonal elements determined according to a water-filling
power allocation scheme (see Theorem \ref{theorem2}) such that the
constraint
$\text{tr}(\boldsymbol{Z}_k\boldsymbol{Z}_k^H)=P_k^{\ast}$ is
satisfied. For those columns of $\boldsymbol{Z}$ that are not
included in $\{\boldsymbol{Z}_k\}_{k=1}^K$, since they make no
difference to the objective function value, they should be set to
zero in order to save the transmit power. Therefore the
asymptotically optimal pilot signal $\boldsymbol{X}$ can be
written as
\begin{align}
\boldsymbol{X}=&\sum_{k=1}^K\boldsymbol{Z}_k^{\ast}\boldsymbol{U}_k^H
=\sum_{k=1}^K\boldsymbol{\Delta}_k^H\boldsymbol{U}_k^H
\nonumber\\
=& \sum_{k=1}^K\boldsymbol{X}_k
\end{align}
which is a superposition of a set of pilot sequences, with each
pilot sequence optimally designed for each individual user.

%where $\alpha_k$ is a scaling factor

\emph{Remark 1:} The above overlayed pilot design has an intuitive
explanation. Given that the AoAs of all users are distinct, from
Property 1, we know that the channel of each user is
asymptotically orthogonal to the channel covariance matrices
associated with other users as $M\rightarrow\infty$, i.e.
$\boldsymbol{h}_k^H\boldsymbol{R}_{k'}=\boldsymbol{0},\forall
k\neq k'$. As a result, we have
$\boldsymbol{X}_{k'}\boldsymbol{h}_k=\boldsymbol{0},\forall k\neq
k'$ for the pilot sequence $\{\boldsymbol{X}_{k'}\}$ devised in
(\ref{Xk}). Hence from the user's perspective, only the optimal
pilot signal will be received, while other non-optimal pilot
signals are filtered when propagating through the channel.

%received by each user is the training sequence that is optimal to
%this user.

%essentially the same as

\emph{Remark 2:} The proposed overlayed downlink training scheme
bears a resemblance to the joint spatial division and multiplexing
(JSDM) strategy \cite{AdhikaryNam13}, where a prebeaforming matrix
is employed to reduce the dimension of the channel to be
estimated. In particular, the prebeamforming matrix suggested by
\cite{AdhikaryNam13} is a concatenation of
$\{\boldsymbol{U}_k\}_{k=1}^K$. Although both the proposed
overlayed training scheme and the JSDM scheme use the eigenvectors
of the channel covariance matrices for downlink training, the
rationale behind these two schemes are different. The JSDM scheme
is shown to be asymptotically optimal in terms of the achievable
capacity, whereas the asymptotic optimality of the proposed
overlayed training scheme is established from the channel
estimation perspective. Finally, we remark that a coordination
strategy can be used to make sure that users to be served in the
same time-frequency slot are well separated in the AoA domain,
similarly as discussed in \cite{YinGesbert13,AdhikaryNam13}.

\section{Estimated Covariance-Assisted MMSE} \label{sec:estimated-covariance-MMSE}
The MMSE estimator assumes perfect knowledge of the downlink
channel covariance matrix. This knowledge, however, is unavailable
and needs to be estimated in practice. If the covariance matrix is
estimated by the user, it needs to be fed back to the BS through
some control channel, which involves a significant amount of
overhead. One way to overcome this difficulty is to estimate the
downlink channel covariance matrix from the uplink covariance
matrix, e.g. \cite{LiangChin01,HochwaldMarzetta01}. This approach,
however, still requires a certain amount of specific uplink
training. In this section, we develop a simple scheme to estimate
the channel covariance matrix based on the one ring model. A MMSE
estimator is then constructed based on the estimated covariance
matrix. Our simulation results indicate that the covariance
estimation scheme is effective and can obtain notable improvement
in estimation performance.

%and investigate the effect of imperfect channel covariance
%information on the performance of the MMSE estimator.

%Compared to training-based covariance estimation methods,
%model-based schemes do not require additional overhead for
%covariance estimation, which could help save a considerable amount
%of training resources. On the other hand, for model-based schemes,
%a reliable estimation of the covariance matrix depends on the
%accuracy of the model used to describe the channel. We will
%examine the impact of model mismatch on the performance of the
%MMSE estimator.

%and noting that $\{\alpha_p\}$ are i.i.d. random variables,

%\subsection{Model-Based Covariance Estimation}

According to the one-ring model (\ref{channel-model}), the
covariance matrix of $\boldsymbol{h}$ can be written as
\begin{align}
\boldsymbol{R}=\frac{\xi^2}{P}\sum_{i=1}^P
E[\boldsymbol{a}(\theta_p)\boldsymbol{a}(\theta_p)^H]=\xi^2E[\boldsymbol{a}(\theta)\boldsymbol{a}(\theta)^H]
\label{eqn5}
\end{align}
To calculate $E[\boldsymbol{a}(\theta)\boldsymbol{a}(\theta)^H]$,
we need to know the distribution of $\theta$. Here we assume
$\theta$ is uniformly distributed with mean angle $\bar{\theta}$
and angular spread $\nu$. Thus the $(m,n)$th entry of
$\boldsymbol{R}$ can be expressed as
\begin{align}
R_{mn}=\frac{\xi^2}{2\nu}\int_{\bar{\theta}-\nu}
^{\bar{\theta}+\nu}
e^{-j2\pi\frac{(m-n)d}{\chi}\cos({\theta})}d\theta \label{Rmn}
\end{align}
The above integration, however, is difficult to calculate. Noting
that the angular spread $\nu$ is usually small, we can use the
Taylor expansion of $\cos(\theta)$ to approximate the integral. We
have
\begin{align}
\cos(\theta)\approx\cos(\bar{\theta})-\sin(\bar{\theta})(\theta-\bar{\theta})
\label{eqn3}
\end{align}
Substituting (\ref{eqn3}) into (\ref{Rmn}), we arrive at
\begin{align}
R_{mn}\approx&\frac{\xi^2}{2\nu} e^{jA_{mn}\cos(\bar{\theta})}
\int_{-\nu}^{\nu}
e^{-jA_{mn}\sin(\bar{\theta})\theta} d\theta \nonumber\\
=& \xi^2 e^{jA_{mn}\cos(\bar{\theta})}
\text{sinc}(A_{mn}\sin(\bar{\theta})\nu) \label{Rmn-final}
\end{align}
where $A_{mn}\triangleq2\pi(m-n)d/\chi$, and
$\textrm{sinc}(x)\triangleq\sin(x)/x$ is the sinc function.
Therefore, the covariance matrix $\boldsymbol{R}$ can be
approximated as a parametric matrix with parameters $\bar{\theta}$
and $\nu$. Note that the parameter $\xi^2$ in (\ref{Rmn-final})
can be ignored since as a scaling factor, it is independent of the
signal subspace of $\boldsymbol{R}$. Thus the channel covariance
estimation problem is simplified to find the mean angle
$\bar{\theta}$ and the angular spread $\nu$. There are several
ways to estimate these two parameters. Here we introduce a
compressed sensing-based method. Recalling that the channel with a
narrow angular spread has an approximate sparse representation on
the angular domain, i.e.
\begin{align}
\boldsymbol{h}=\boldsymbol{A}\boldsymbol{\tilde{h}}
\end{align}
where $\boldsymbol{A}$ is an $M\times M$ unitary matrix determined
by the array geometry at the base station. For the uniform linear
array, $\boldsymbol{A}$ becomes the DFT matrix consisting of
columns characterized by different angular coordinates.
$\boldsymbol{\tilde{h}}$ is an approximately sparse vector, of
which the $m$th element is contributed by the paths around the
$m$th angular coordinate. Due to the narrow angular spread, a
majority of the channel energy is concentrated on a few
consecutive angular coordinates. Hence the mean angle and angular
spread can be coarsely estimated from the sparse signal
$\boldsymbol{\tilde{h}}$. More precisely, the angular coordinate
which has the largest magnitude can be estimated as the mean
angle, i.e.
\begin{align}
\hat{\bar{\theta}}=
\begin{cases}
\arccos\left[\frac{\chi}{d}\left(\frac{s-1}{M}\right)\right]& \text{if $s\leq\frac{M}{2}+1$ }\\
\arccos\left[\frac{\chi}{d}\left(\frac{s-1}{M}-1\right)\right]&
\text{otherwise}
\end{cases}
\end{align}
where $s$ is the index of the angular coordinate which has the
largest magnitude, i.e. the $s$th element of
$\boldsymbol{\tilde{h}}$ has the largest magnitude. The angular
spread can be estimated as a symmetric interval around the mean
angle, say,
$[\hat{\bar{\theta}}-\hat{\nu},\hat{\bar{\theta}}+\hat{\nu}]$,
with a majority of the channel energy (say, $90\%$) included in
this interval. Now it remains to estimate the sparse vector
$\boldsymbol{\tilde{h}}$. As indicated earlier in this paper, the
estimation of $\boldsymbol{\tilde{h}}$ can be formulated into a
sparse signal recovery problem:
\begin{align}
\boldsymbol{y}=\boldsymbol{X}\boldsymbol{h}+\boldsymbol{w}=\boldsymbol{X}\boldsymbol{A}\boldsymbol{\tilde{h}}
+\boldsymbol{w} \label{eqn4}
\end{align}
and can be efficiently solved via greedy or convex optimization
methods. After $\boldsymbol{\tilde{h}}$ is recovered, the mean
angle and the angular spread can be obtained by using the
aforementioned procedure, and an estimate of the channel
covariance matrix can be computed by substituting the estimated
mean angle and angular spread into (\ref{Rmn-final}). Finally, a
MMSE estimate of $\boldsymbol{h}$ can be obtained.

For clarity, we summarize our proposed estimated
covariance-assisted MMSE scheme in Algorithm 1.

\begin{algorithm}
\caption{Estimated Covariance-Assisted MMSE (EC-MMSE)}
\label{alg:A}
\begin{algorithmic}
\STATE { Given the received signal $\boldsymbol{y}\in\mathbb{C}^T$
and the pilot signal $\boldsymbol{X}\in\mathbb{C}^{T\times M}$.
\begin{description}
  \item[1] Recover $\boldsymbol{\tilde{h}}$ from
  $\boldsymbol{y}=\boldsymbol{X}\boldsymbol{A}\boldsymbol{\tilde{h}}+\boldsymbol{w}$
      via compressed sensing techniques, where $\boldsymbol{A}$
      is a DFT matrix for the uniform linear array case.
  \item[2] Estimate the mean angle $\hat{\bar{\theta}}$ and angular spread
  $\hat{\nu}$ based on $\boldsymbol{\tilde{h}}$, then obtain an estimate of the channel
  covariance matrix, $\boldsymbol{\hat{R}}$, via (\ref{Rmn-final}).
  \item[3]  Construct a MMSE estimator
  $\boldsymbol{\hat{h}}=\boldsymbol{\hat{R}}\boldsymbol{X}^H(\boldsymbol{X}\boldsymbol{\hat{R}}\boldsymbol{X}^H
      +\sigma^2\boldsymbol{I})^{-1}\boldsymbol{y}$ to estimate the channel $\hat{\boldsymbol{h}}$.
\end{description}
}
\end{algorithmic}
\end{algorithm}

\emph{Remark 1:} Although $\boldsymbol{h}$ can be directly
estimated from (\ref{eqn4}) via compressed sensing techniques, the
MMSE estimator with the help of the estimated channel covariance
matrix can provide a better estimation accuracy, as demonstrated
by our simulation results. Our proposed MMSE estimator can be
employed either at the mobile station (i.e. user) or at the BS to
estimate the channel. If the channel is estimated by the mobile
station, the full CSI needs to be fed back to the BS, which causes
a large amount of uplink overhead when $M$ is large. An
alternative approach is to let the mobile station simply feed back
the received signal $\boldsymbol{y}$ to the BS, and let the BS
form an estimate of the channel based on $\boldsymbol{y}$. This
approach requires less uplink overhead since the dimension of
$\boldsymbol{y}$ is usually smaller than the dimension of the
channel $\boldsymbol{h}$. It should be noted for our proposed
method, the received pilot signal $\boldsymbol{y}$ is used for
both the channel covariance matrix and the channel estimation.
Thus no additional overhead is required.

\emph{Remark 2:} Our scheme assumes a uniform angle of arrival
(AoA) distribution when estimating the channel covariance matrix.
In practice, the AoA may not strictly follow a uniform
distribution. Nevertheless, note that the eigenvectors of the
channel covariance matrix are more closely related to the location
of the interval over which the AoA is distributed, but less
dependent on the specific distribution of the AoA. Hence our
estimation scheme which assumes a uniform AoA distribution can
still reliably estimate the true dominant eigenvectors when there
is a mismatch between the presumed AoA distribution and the true
distribution. As a result, the proposed MMSE estimator still
deliver superior performance, as verified by our simulation
results.

\emph{Remark 3:} The estimation of the channel covariance matrix
based on the one-ring model was also considered in
\cite{NohZoltowski14}. Specifically, the work
\cite{NohZoltowski14} suggested to estimate the channel covariance
matrix as
$\boldsymbol{\hat{R}}=\boldsymbol{F}\boldsymbol{D}\boldsymbol{F}^H$,
where $\boldsymbol{F}$ is a DFT matrix and $\boldsymbol{D}$ is a
diagonal matrix that contains the angular power spectral values.
Our simulation results, however, show that, for a finite number of
antennas, this covariance estimation approximation is not accurate
enough and a MMSE estimator based on this covariance approximation
even leads to deteriorated estimation performance. In addition, as
indicated in \cite{NohZoltowski14}, the estimation of the angular
power spectrum requires additional training overhead and
computational cost.

\section{Simulation Results} \label{sec:simulation}
We now carry out experiments to validate our theoretical results
and to illustrate the performance of the estimated
covariance-assisted MMSE estimator (referred to as EC-MMSE)
proposed in Section \ref{sec:estimated-covariance-MMSE}.
Throughout our simulations, unless otherwise explicitly specified,
we assume a uniform linear array with $M=64$ antennas, and the
distance between neighboring antenna elements is set to a half of
the wavelength of the signal.

\begin{figure*}[!t]
 \subfigure[Exact low-rank covariance matrix]{\includegraphics[width=3.5in]{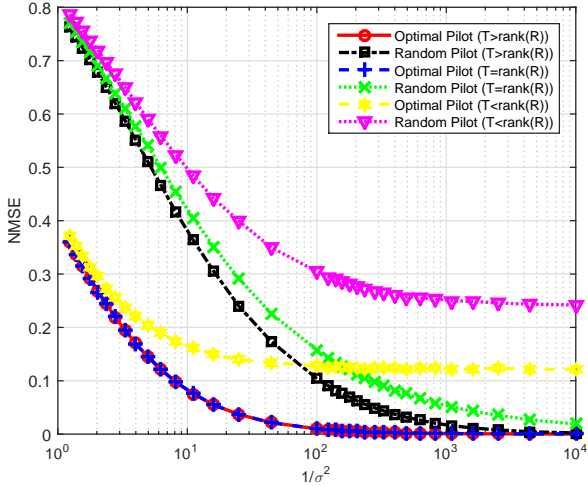}} \hfil
\subfigure[Approximate low-rank covariance
matrix]{\includegraphics[width=3.5in]{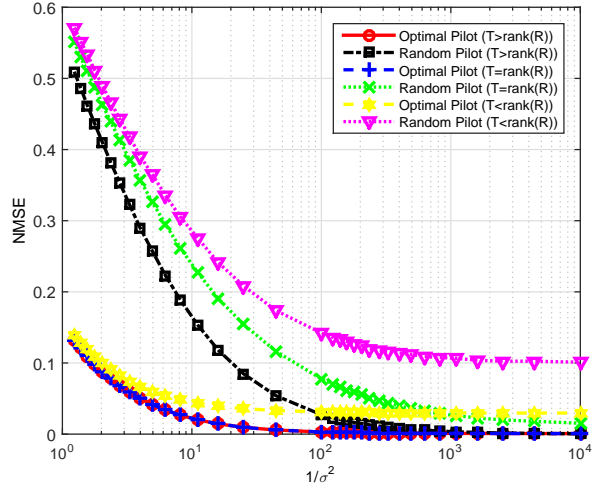}} \caption{NMSE
versus $1/\sigma^2$ for different choices of $T$.} \label{fig1}
\end{figure*}

We first examine the behavior of the MMSE estimator in the
asymptotic low-noise regime when the channel covariance matrix has
a low-rank or an approximate low-rank structure. The channel
covariance matrix is assumed perfectly known by the MMSE
estimator. Fig. \ref{fig1} depicts the normalized mean-squared
errors (NMSEs) of the MMSE estimator vs. the reciprocal of the
noise variance, where we consider both the optimal pilot sequence
devised according to Theorem \ref{theorem2} and a random pilot
sequence whose entries are i.i.d. normal random variables. Note
that the random pilot sequence has to be multiplied by a scaling
factor to satisfy a power constraint
$\text{tr}(\boldsymbol{X}\boldsymbol{X}^H)\leq P$ that is also
imposed on the optimal pilot. In Fig. \ref{fig1}(a), we randomly
generate an exact low-rank channel covariance matrix
$\boldsymbol{R}$ whose rank is set equal to 15. While for Fig.
\ref{fig1}(b), the channel covariance matrix is generated
according to the one-ring model, where the AoAs are assumed to be
uniformly distributed over an interval
$[\bar{\theta}-\nu,\bar{\theta}+\nu]$, with the mean angle and the
angular spread given respectively by $\bar{\theta}=\pi/6$ and
$\nu=\pi/10$, the total number of i.i.d. paths is set to $P=100$,
and $\alpha_p$ follows a complex Gaussian distribution with zero
mean and variance $\xi^2=1$. A numerical average is utilized to
compute (\ref{eqn5}) and obtain the channel covariance matrix for
the one-ring model. Numerical results show that the covariance
matrix has an approximate low-rank structure with about $12$
dominant eigenvalues. To examine the impact of the number of pilot
symbols on the estimation performance, we consider three different
choices of $T$ in our simulations, namely,
$T=20>\text{rank}(\boldsymbol{R})$,
$T=\text{rank}(\boldsymbol{R})$, and
$T=10<\text{rank}(\boldsymbol{R})$. From Fig. \ref{fig1}, we
observe that when the number of symbols $T$ is no less than the
rank of the channel covariance matrix, the NMSE of the MMSE
estimator approaches zero in the limit of vanishing noise, i.e.
$\sigma^2\rightarrow 0$, whatever an optimal pilot sequence or a
random pilot sequence is employed. On the other hand, when
$T<\text{rank}(\boldsymbol{R})$, there exists an error floor for
both the optimal and random pilots, that is, once the error floor
is reached, a decrease in the noise power does not bring any
additional estimation performance improvement. This result
corroborates our theoretical analysis in Section
\ref{sec:low-noise-MMSE}. Also, given a power constraint, the
optimal pilot sequences for $T>\text{rank}(\boldsymbol{R})$ and
$T=\text{rank}(\boldsymbol{R})$ are identical. Thus the NMSEs
achieved by optimal pilot sequences remain unaltered for these two
cases.

\begin{figure*}[!t]
 \subfigure[NMSE vs. SNR]{\includegraphics[width=3.5in]{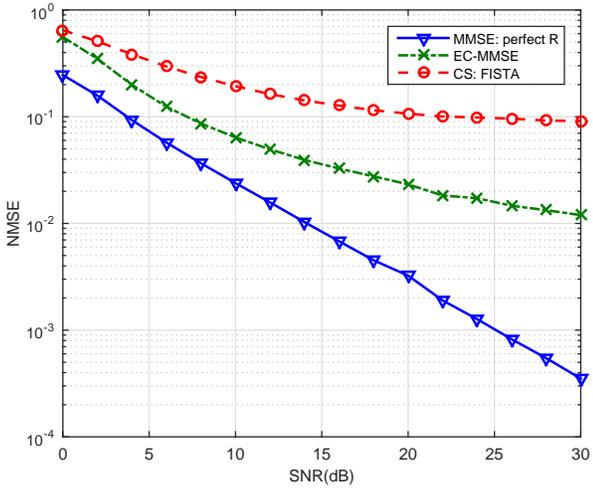}} \hfil
\subfigure[NMSE vs. number of symbols
$T$]{\includegraphics[width=3.5in]{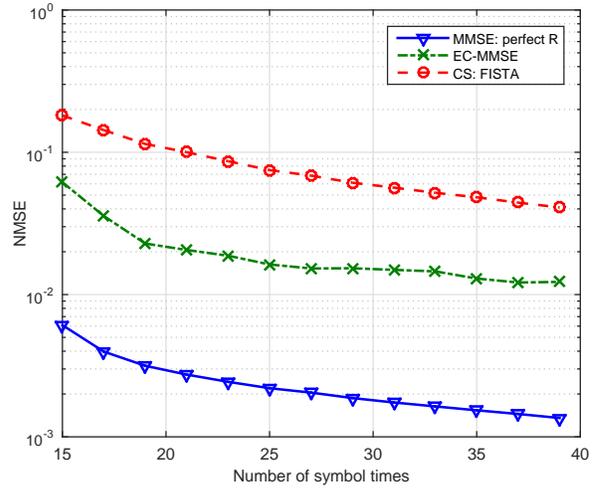}} \caption{NMSEs of
respective schemes vs. SNR and number of symbols $T$} \label{fig2}
\end{figure*}

\begin{figure*}[!t]
 \subfigure[EC-MMSE]{\includegraphics[width=3.5in]{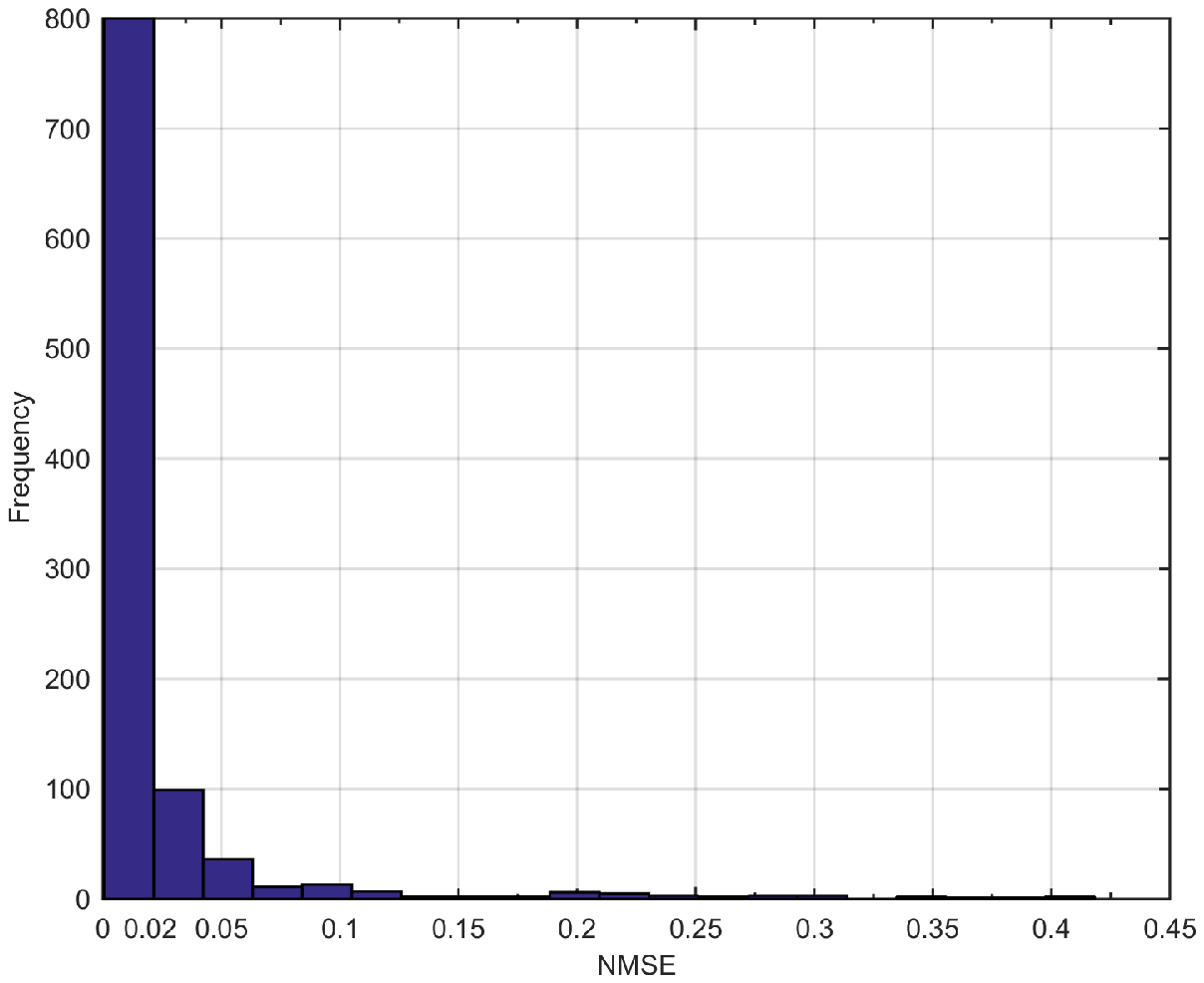}} \hfil
\subfigure[CS:FISTA]{\includegraphics[width=3.5in]{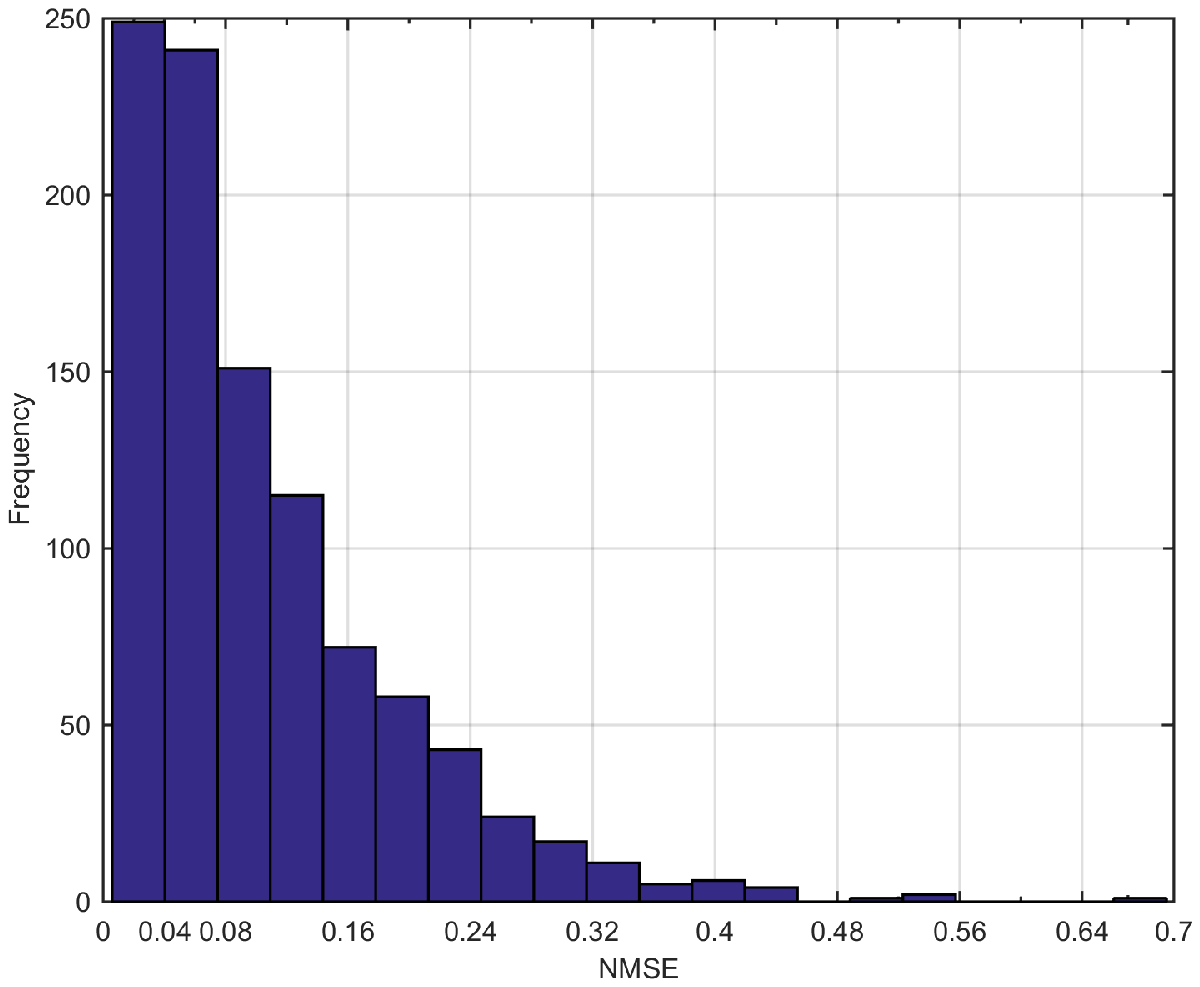}}
\caption{Histogram of the NMSE associated with the EC-MMSE
estimator and the compressed sensing method} \label{fig3}
\end{figure*}

%For the one-ring model, we have shown that entries of the channel
%covariance matrix can be approximated as a function of two
%parameters, namely, the mean angle and the angle spread. These two
%parameters can be estimated by the pilot-based scheme discussed in
%, based on which we can obtain an estimated covariance matrix and
%construct a MMSE estimator.

%For our proposed MMSE estimator, we also need to solve
%(\ref{eqn4}) to obtain an estimate of the mean angle and the angle
%spread, based on which a covariance matrix is calculated according
%to (\ref{Rmn}) and a MMSE estimator is constructed to estimate the
%channel. Note that for our proposed scheme, the received pilot
%signal is used for both the parameter (i.e. mean angle and angle
%spread) estimation and the channel estimation, and no extra pilot
%symbols are needed.

%estimates the channel via exploiting the sparsity on the virtual
%angular domain and

%It is interesting to see if a covariance matrix estimated from the
%result of a compressed sensing method could further help improve
%the estimation accuracy.

Next, we evaluate the performance of the EC-MMSE estimator
proposed in Section \ref{sec:estimated-covariance-MMSE}. In our
simulations, channels are randomly generated according to the
one-ring model described above. Fig. \ref{fig2}(a) depicts the
NMSEs of respective methods as a function of the signal-to-noise
ratio (SNR), where we set $T=20$ and the SNR is defined as
$10\log(\|\boldsymbol{Xh}\|_2^2/T\sigma^2)$. Results are averaged
over 1000 independent runs, with the pilot sequence
$\boldsymbol{X}$ and the channel $\boldsymbol{h}$ randomly
generated for each run. In each run, the noise variance $\sigma^2$
is adjusted to meet a pre-specified SNR. A compressed sensing
method and a MMSE estimator which has access to the true
covariance matrix\footnote{The true covariance matrix is
calculated according to (\ref{eqn5}) via numerical average.} are
also included for comparison. For the compressed sensing method, a
fast iterative shrinkage-thresholding algorithm (FISTA) is
employed to estimate the channel based on (\ref{eqn4}). The
EC-MMSE estimator is built on the compressed sensing method: after
the virtual channel $\boldsymbol{\tilde{h}}$ is estimated via the
FISTA, we estimate the mean angle and the angular spread, then
obtain an estimate of the channel covariance matrix, and finally
construct the MMSE estimator. In our simulations, the angular
spread is estimated as as a symmetric interval around the
estimated mean angle, with $95\%$ of the channel energy
concentrated on the interval. From Fig. \ref{fig2}(a), we see that
our proposed scheme achieves a notably higher accuracy compared to
the compressed sensing method. This result shows that the
estimated covariance matrix, although imperfect, can still provide
a substantial performance improvement. Fig. \ref{fig2}(b) plots
the NMSEs of respective schemes vs. the number of symbols $T$,
where we set $\text{SNR}=20\text{dB}$. This result again
demonstrates the advantage of the proposed EC-MMSE estimator over
the compressed sensing method. To better illustrate the
performance, we plot the histogram in Fig. \ref{fig3} to show the
distribution of the NMSE for the EC-MMSE and the compressed
sensing method, respectively. From Fig. \ref{fig3}, we see that
the proposed EC-MMSE yields an accurate channel estimate (with an
NMSE within the range $[0,0.02]$) with a high probability, whereas
the NMSEs associated with the compressed sensing method spread
across the range $[0.04,0.4]$ with a high probability.

%Note that for the Gaussian AoA distribution, the notion of angle
%spread no longer applies.

\begin{figure*}[!t]
 \subfigure[NMSE vs. SNR]{\includegraphics[width=3.5in]{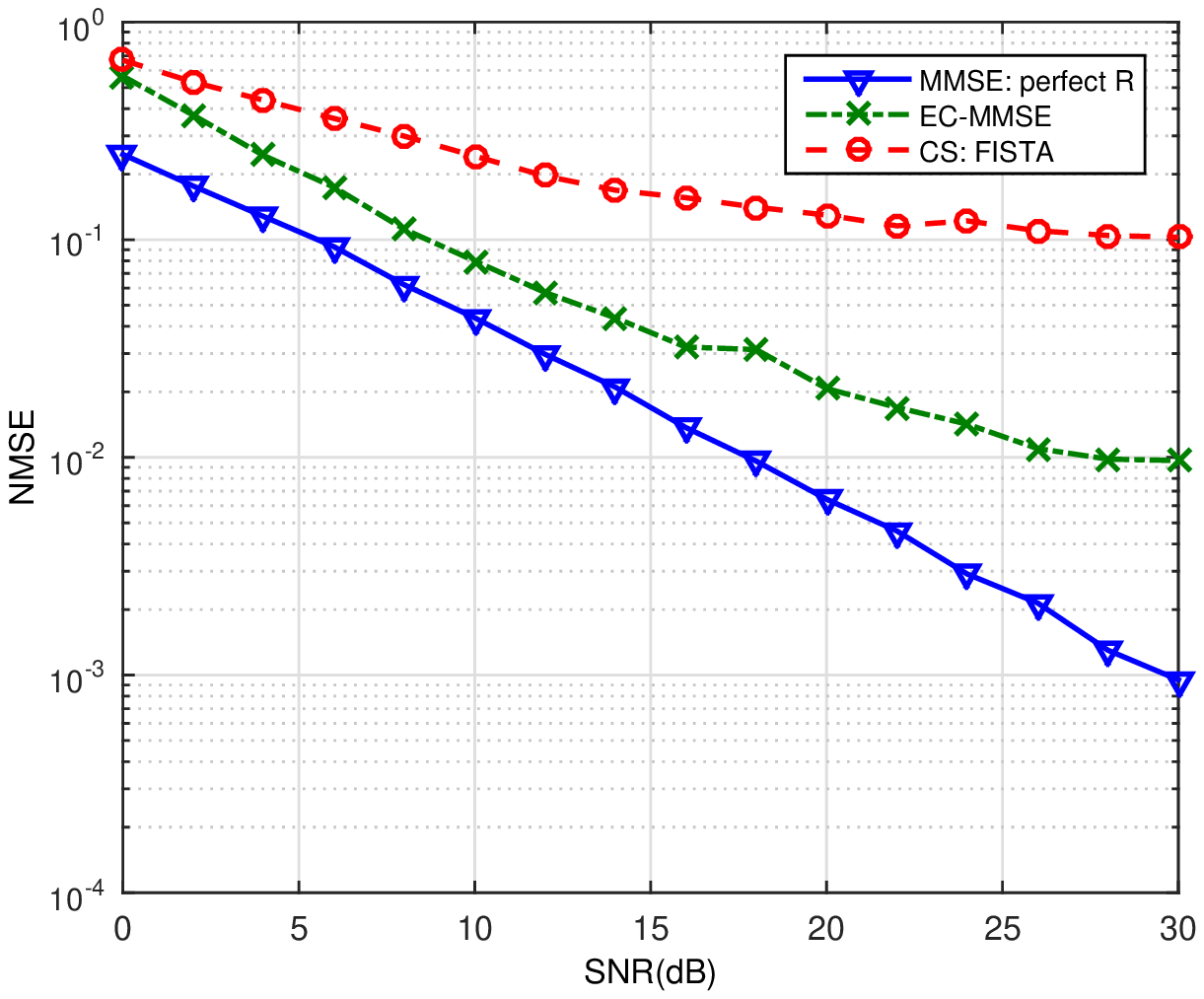}} \hfil
\subfigure[NMSE vs. number of symbols
$T$]{\includegraphics[width=3.5in]{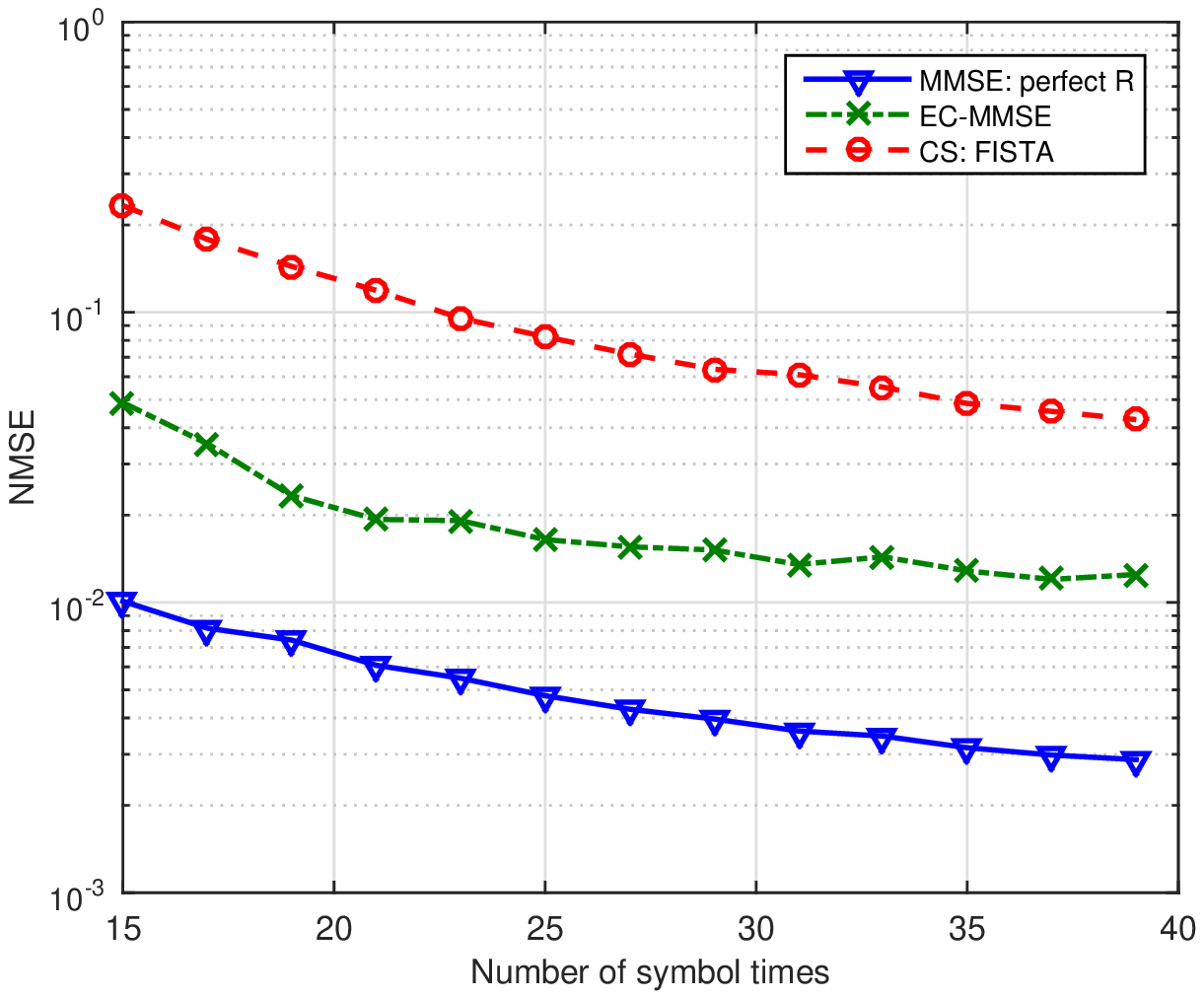}}
\caption{Gaussian AoA: NMSEs of respective schemes vs. SNR and
number of symbols $T$} \label{fig5}
\end{figure*}

Also, to examine the robustness of the proposed EC-MMSE estimator
against the model mismatch, in our simulations, we assume that the
angle of arrival follows a Gaussian distribution, with the mean
and the standard deviation set to be $\bar{\theta}=\pi/6$ and
$\sigma_{\theta}=\pi/30$, respectively. Note that in the proposed
EC-MMSE estimator, a uniform AoA distribution is assumed to
estimate the channel covariance matrix. In Fig. \ref{fig5}(a) and
Fig. \ref{fig5}(b), we plot the NMSEs of respective schemes as a
function of the SNR and the number of symbols, respectively, where
we set $T=20$ for Fig. \ref{fig5}(a) and $\text{SNR}=20\text{dB}$
for Fig. \ref{fig5}(b). Results are averaged over 1000 independent
runs, with the pilot sequence and the channel randomly generated
for each run. In each run, the noise variance is adjusted to meet
a pre-defined SNR. From Fig. \ref{fig5}, we see that the proposed
EC-MMSE estimator achieves superior performance even the presumed
AoA distribution is different from the true one. The reason, as
already explained in the previous section, is that the
eigenvectors of the channel covariance matrix are less dependent
on the AoA distribution. Therefore our scheme which assumes a
uniform AoA distribution can still reliably estimate the signal
subspace spanned by dominant eigenvectors, and as a result, the
EC-MMSE estimator still outperforms the compressed sensing method
by a big margin.

\begin{figure*}[!t]
 \subfigure[NMSE vs. the deviation (in angular degree) of the estimated mean angle from the true one]
 {\includegraphics[width=3.5in]{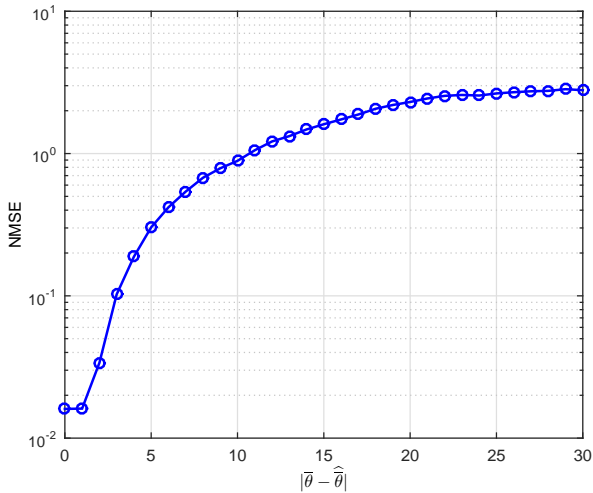}} \hfil
\subfigure[NMSE vs. the deviation (in angular degree) of the
estimated angle spread from the true
one]{\includegraphics[width=3.5in]{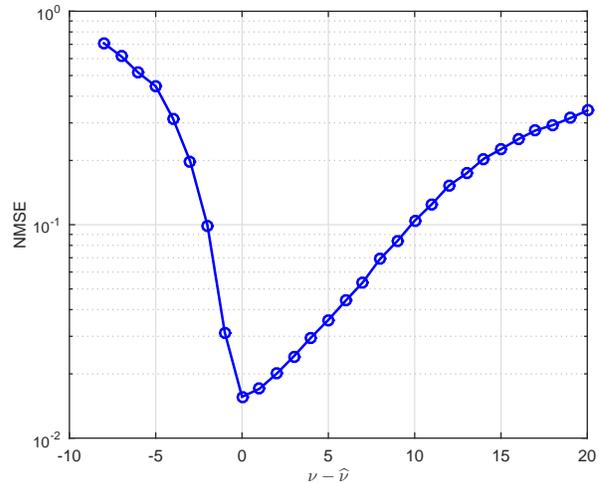}} \caption{NMSE of
EC-MMSE estimator vs. estimation errors of the mean angle and the
angular spread.} \label{fig4}
\end{figure*}

Lastly, to more thoroughly evaluate the performance of the
proposed EC-MMSE estimator, we examine its robustness against
estimation errors of the mean angle and the angular spread. Since
in the EC-MMSE scheme, the channel covariance matrix is obtained
based on the estimated mean angle and angular spread, estimation
errors of the mean angle and the angular spread will impair the
estimation quality of the covariance matrix, which, in turn,
affects the estimation accuracy of the EC-MMSE estimator. In Fig.
\ref{fig4}(a), we plot the NMSE of the EC-MMSE estimator as the
estimated mean angle deviates from the true one, where we set
$T=15$, $\sigma^2=0.1$, and the angular spread is assumed
perfectly known. Results are averaged over $10^3$ independent
runs, and for each run, the pilot sequence is randomly generated
to meet a pre-specified power constraint, and the channel is
randomly generated according to the one-ring model described in
the second paragraph of this section. We see that the EC-MMSE
estimator exhibits some robustness against the mean angle
mismatch: the EC-MMSE estimator incurs mild performance
degradation if the deviation of the estimated mean angle from the
true one is small, say,
$|\bar{\theta}-\hat{\bar{\theta}}|<3^{\circ}$. Nevertheless, a
large deviation would result in a significant performance
degradation. Fig. \ref{fig4}(b) depicts the behavior of the
proposed EC-MMSE estimator when the estimated angular spread
deviates from the true angular spread, where the mean angle is
assumed perfectly estimated. From Fig. \ref{fig4}(b), it can be
observed that the EC-MMSE estimator is robust to an overestimation
of the angular spread, but is sensitive to the underestimation
errors: it suffers from a substantial performance loss when the
estimated angular spread is smaller than the true one. Hence it is
safer to overestimate than to underestimate the angular spread.

%This result corroborates our previous analysis in Section.

\section{Conclusions} \label{sec:conclusion}
We considered the problem of downlink training and channel
estimation for FDD massive MIMO systems. Since the required amount
of overhead for downlink training grows linearly with the number
of transmit antennas at the BS, reducing the overhead for downlink
training and uplink feedback has been a central issue in FDD
massive MIMO systems. In this paper, we exploited the low-rank
structure of the channel covariance matrix to reduce the overhead
for downlink training. We studied the asymptotic behavior of the
MMSE estimator when the channel covariance matrix has a low-rank
structure. Our analysis shows that the MMSE estimator can achieve
an exact channel recovery in the asymptotic low-noise regime,
provided that the number of pilot symbols in time is no smaller
than the rank of the channel covariance matrix. We also examined
the optimal pilot sequence design for the single-user case, and an
asymptotic optimal pilot sequence design for the multi-user
scenario. We also develop a training-free scheme to estimate the
channel covariance matrix. Simulation results show that a MMSE
estimator based on the estimated covariance matrix achieves a
substantial performance improvement as compared with the
compressed sensing method, and is robust against the AoA
distribution mismatch and the angular spread estimation error.

\bibliography{newbib}
\bibliographystyle{IEEEtran}

\end{document}